\documentclass[a4paper]{amsart}

\usepackage{amsmath, amsfonts, amsthm, framed}
\usepackage{xspace}
\usepackage{newunicodechar}
\usepackage{fullpage}
\usepackage{enumitem}
\usepackage{quiver}
\usepackage{fontawesome}
\usepackage[numbers]{natbib}
\usepackage{soul}
\usepackage{cancel}
\usepackage{xspace}
\usepackage{eucal}
\usepackage{colonequals}
\usepackage[colorlinks=true]{hyperref}
\definecolor{darkergreen}{rgb}{0.0, 0.5, 0.0}
\definecolor{Blue}{RGB}{0,0,148}

\usepackage{listings}

\lstset{language=lean}	
\definecolor{light-gray}{gray}{0.95}
\lstset{basicstyle=\linespread{1.1}\ttfamily\footnotesize,
    backgroundcolor=\color{light-gray}, xleftmargin=0.7cm,
    frame=tlbr, framesep=0.2cm, framerule=0pt,
}			

\usepackage{tikz}
\usetikzlibrary{positioning, arrows.meta, shapes.geometric}
\usepackage[a4paper, margin=0.75in]{geometry} 

\definecolor{keywordcolor}{rgb}{0.7, 0.1, 0.1}   
\definecolor{commentcolor}{rgb}{0.4, 0.4, 0.4}   
\definecolor{symbolcolor}{rgb}{0, 0, 0.8}    
\definecolor{tacticcolor}{rgb}{0, 0, 0.8}    
\definecolor{sortcolor}{rgb}{0.1, 0.5, 0.1}      
\newcommand*{\lean}[1]{\lstinline{#1}\xspace} 
\theoremstyle{plain}
\newtheorem{theorem}{Theorem}[section]

\newtheorem{lemma}[theorem]{Lemma}

\theoremstyle{definition}
\newtheorem{definition}[theorem]{Definition}
\newtheorem{remark}[theorem]{Remark}

\newcommand*{\N}{\mathbb{N}}
\newcommand*{\Z}{\mathbb{Z}}
\newcommand*{\Q}{\mathbb{Q}}
\newcommand*{\C}{\mathbb{C}}

\newcommand*{\OK}{\mathcal{O}_K}
\newcommand*{\Ok}{\mathcal{O}_k}
\newcommand*{\OL}{\mathcal{O}_L}
\newcommand{\gothp}{\mathfrak{p}}

\newcommand{\gothz}{\mathfrak{z}}

\DeclareMathOperator{\Tr}{Tr}

\DeclareMathOperator{\Norm}{Norm}
\DeclareMathOperator{\Gal}{Gal}
\DeclareMathOperator{\tors}{tors}
\DeclareMathOperator{\rk}{rk}

\newcommand*{\mathlib}{\textsc{mathlib}\xspace} 
\newcommand*{\Lean}{\textsc{Lean}\xspace} 
\newcommand*{\Leant}{\textsc{Lean3}\xspace}
\newcommand*{\Leanf}{\textsc{Lean4}\xspace}
\newcommand*{\fltregular}{\textsc{flt-regular}\xspace}
\newcommand*{\e}{\mathrm{e}}
\title{A complete formalization of Fermat's Last Theorem for regular primes in Lean}

\author{Alex J. Best}
\urladdr{\href{https://alexjbest.github.io/}{https://alexjbest.github.io/}}
\email{alexjbest@gmail.com}
\author{Christopher Birkbeck}
\address{University of East Anglia}
\email{c.birkbeck@uea.ac.uk}
\urladdr{\href{https://cdbirkbeck.wixsite.com/website}{https://cdbirkbeck.wixsite.com/website}}
\author{Riccardo Brasca}
\address{Université Paris Cité and Sorbonne Université, CNRS, IMJ-PRG, F-75013 Paris, France.}
\email{riccardo.brasca@imj-prg.fr}
\urladdr{\href{https://webusers.imj-prg.fr/~riccardo.brasca/}{https://webusers.imj-prg.fr/~riccardo.brasca/}}
\author{Eric Rodriguez Boidi}
\email{eric.rodriguez-boidi@kcl.ac.uk}
\author{Ruben Van de Velde}
\email{ruben.vandevelde@gmail.com}
\author{Andrew Yang}
\email{a.yang24@imperial.ac.uk}

\begin{document}

\begin{abstract}
We formalize a complete proof of the regular case of Fermat's Last Theorem in the \Leanf{} theorem prover. Our formalization includes a proof of Kummer's lemma, which is the main obstruction to Fermat's Last Theorem for regular primes. Rather than following the modern proof of Kummer's lemma via class field theory, we prove it by using Hilbert's Theorems 90-94 in a way that is more amenable to formalization. 
\end{abstract}

\maketitle

\section{Introduction}
For $x,y,z \in \Z$ and $n \in \N$ with $n> 2$, Fermat's Last Theorem (FLT) is the result that there are no solutions to
\[
x^n+y^n=z^n
\]
with $x$, $y$ and $z$ all different from $0$. This apparently easy result has been stated by Pierre de Fermat around 1637, but the first full proof, by Andrew Wiles and Richard Taylor, only appeared in 1995 in the two papers~\cite{wiles, taylor-wiles}. The quest for a proof of FLT has a long history of driving the development of new mathematics and this has continued now as mathematicians and computer scientists look to formalize mathematics. The importance of FLT lies not in the theorem itself (as number theory is full of seemingly easy equations that turn out to be extremely difficult to solve), but in the theories that have been developed to solve it. Indeed, attempting to prove FLT played a major role in the development of algebraic number theory.

Throughout the paper, we use the symbol \faExternalLink\xspace for external links. Almost every mathematical statement and definition will be accompanied by such a link directly to the source code for the corresponding statement in \mathlib{} or in \fltregular. To keep the links usable, they are all to a fixed commit of the master branch (the most recent one at the time of writing).

A routine argument shows that it is enough to prove FLT in the case where the exponent is equal to $4$ or it is a prime number $p \neq 2$. The case $n = 4$, proved by Fermat himself, is not completely obvious but was already in \mathlib{} at the beginning of our project, see \lean{fermatLastTheoremFour}\href{https://github.com/leanprover-community/mathlib4/blob/4705244150fe57b4006d3393c5040d3d74e024b4/Mathlib/NumberTheory/FLT/Four.lean#L298}{\faExternalLink}. In what follows we will describe the formalization (in the \Leanf{} theorem prover) of FLT in the special case where the exponent $p$ is what is known as a regular prime. That is, $p$ is a prime number (different from $2$) that does not divide the class number of the $p$-th cyclotomic extension $\Q(\e^{\frac{2\pi i}{p}})/\Q$. This case is significantly simpler than the full version and it is amenable to formalization given the current state of our formalized libraries, such as \mathlib{}~\cite{mathlib}.\footnote{A partial formalization of the full proof is a \href{https://github.com/ImperialCollegeLondon/FLT}{work in progress} by Kevin Buzzard, and it will likely take several years to complete.} There are several reasons for wanting to formalize the regular case. Firstly, it is a stress test of \mathlib{}: even though the tools we need here are much more basic than those required for the full proof, by formalizing this case we have extensively improved the number theory library, finding various issues with \mathlib{}'s original implementation. As a result of this work, those issues have now been addressed. Secondly, many of the tools required are of independent interest, such as discriminants, cyclotomic fields, ramification results, etc., all of which were originally formalized as a part of this work. Having \fltregular as our goal allowed us to develop these theories in a cohesive manner. Lastly, this represents the first formalization of a non-trivial family of cases of FLT (a family that is conjecturally infinite).

In the previous work~\cite{case1} of (some of) the authors, we formalized the following first step towards proving FLT in the regular case:
\begin{theorem}[Case 1]
	Let $p$ be an odd regular prime. Then $x^p+y^p=z^p$ has no solutions with $x,y,z \in \Z$ and $\gcd(xyz,p)=1$.
\end{theorem}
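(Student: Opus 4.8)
The plan is to work in the ring of integers $\OK = \Z[\zeta_p]$ of the cyclotomic field $K = \Q(\zeta_p)$, where $\zeta_p = \e^{2\pi i/p}$, and to exploit regularity ($p \nmid h_K$) to turn an integer solution into an identity between algebraic integers that is too rigid to hold. First I would reduce to the case $\gcd(x,y,z)=1$, so that $x,y,z$ are pairwise coprime and, by hypothesis, all prime to $p$. Over $\OK$ the equation factors completely,
\[
z^p = \prod_{k=0}^{p-1}\bigl(x + \zeta_p^{\,k} y\bigr),
\]
since $X^p + Y^p$ splits into linear factors involving the powers of $\zeta_p$.

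The first substantial step is to show the ideals $(x + \zeta_p^{\,k} y)$ are pairwise coprime. If a prime ideal $\mathfrak{q}$ divided two factors $(x+\zeta_p^{\,j}y)$ and $(x+\zeta_p^{\,k}y)$ with $j\neq k$, it would divide their difference $(\zeta_p^{\,j}-\zeta_p^{\,k})y$ and, symmetrically, a comparable multiple of $x$; as $\zeta_p^{\,j}-\zeta_p^{\,k}$ is a unit multiple of $1-\zeta_p$ and $\gcd(x,y)=1$ prevents $\mathfrak{q}$ from dividing both $(x)$ and $(y)$, we would be forced to have $\mathfrak{q}\mid(1-\zeta_p)$, i.e. $\mathfrak{q}$ lies over $p$, whence $p\mid z$, contradicting $p\nmid z$. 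Unique factorization of ideals then upgrades the displayed equation: each factor is a perfect $p$-th power of an ideal, so $(x + \zeta_p y) = \mathfrak{a}^p$. Here regularity enters decisively: $\mathfrak{a}^p$ is principal, so the class of $\mathfrak{a}$ has order dividing $p$ in the class group, and since $p \nmid h_K$ this class is trivial. Thus $\mathfrak{a} = (\alpha)$ and
\[
x + \zeta_p y = u\,\alpha^p
\]
for some unit $u \in \OK^\times$ and some $\alpha \in \OK$.

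Next I would extract arithmetic information from this identity modulo $p$, using two inputs. The first is the structure of units: every $u \in \OK^\times$ can be written $u = \zeta_p^{\,r}\,\varepsilon$ with $\varepsilon$ fixed by complex conjugation, proved by showing $u/\bar u$ is an algebraic integer all of whose conjugates have absolute value $1$, hence a root of unity $\pm\zeta_p^{\,a}$, with the sign pinned down to $+$ by a congruence modulo $(1-\zeta_p)$. The second is Kummer's congruence: for any $\beta \in \OK$ one has $\beta^p \equiv a \pmod{p}$ for a rational integer $a$, because raising $\sum a_i \zeta_p^{\,i}$ to the $p$-th power and reducing mod $p$ collapses every $\zeta_p^{\,ip}$ to $1$. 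Applying complex conjugation to $x + \zeta_p y = u\alpha^p$ and using that $\alpha^p$ and $\bar\alpha^p$ are congruent to the same integer $a$, I obtain $x + \zeta_p y \equiv \zeta_p^{\,r}\varepsilon a$ and $x + \zeta_p^{-1} y \equiv \zeta_p^{-r}\varepsilon a \pmod p$, which combine, after cancelling $\varepsilon a$, into
\[
x + \zeta_p y \equiv \zeta_p^{\,2r} x + \zeta_p^{\,2r-1} y \pmod{p}.
\]

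The final step, which I expect to be the main obstacle, is a careful case analysis of this congruence in the integral basis $1, \zeta_p, \dots, \zeta_p^{p-2}$ of $\OK/p\OK$. When the four exponents $0,1,2r-1,2r$ are distinct modulo $p$ one reads off $p \mid x$ or $p\mid y$ directly, a contradiction; the coincidences $2r\equiv 0$ and $2r\equiv 2$ likewise force $p\mid y$ and $p\mid x$. The only surviving case is $2r \equiv 1$, which yields $x \equiv y \pmod p$ rather than an immediate contradiction. Here I would invoke the symmetry of the equation: rewriting $x^p+y^p=z^p$ as $x^p+(-z)^p=(-y)^p$ and as $y^p+(-z)^p=(-x)^p$ and running the same argument gives $x \equiv -z$ and $y \equiv -z \pmod p$; combined with $x+y\equiv z \pmod p$ (from Fermat's little theorem) this forces $3z \equiv 0$, hence $p = 3$. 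The prime $p=3$ is then disposed of separately by the elementary fact that nonzero cubes are $\equiv \pm 1 \pmod 9$, so $x^3+y^3\not\equiv z^3$ when $3\nmid xyz$. The delicate points throughout are tracking the exponents modulo $p$ in the non-distinct cases (which is where the hypothesis $p\geq 5$ is used) and correctly pinning down the unit up to its real factor.
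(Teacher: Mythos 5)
Your proposal is correct and follows essentially the same route as the paper's (which is the classical Kummer argument formalized in the cited prior work on Case 1): factor $z^p=\prod_k(x+\zeta_p^k y)$ in $\Z[\zeta_p]$, use coprimality of the factors and regularity to write $x+\zeta_p y = u\alpha^p$, then combine the unit decomposition $u=\zeta_p^r\varepsilon$ with $\varepsilon$ real (item (4) in the paper's list of formalized ingredients) with the congruence $\alpha^p\equiv a \pmod p$ and a case analysis on exponents, treating $p=3$ separately (item (5)). The two proofs match in all essential steps, including where $p\geq 5$ is needed and the symmetry trick resolving the $x\equiv y \pmod p$ case.
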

Here we will be concerned with the full proof (of the regular case), which replaces the condition that $\gcd(xyz,p)=1$ with $xyz \ne 0$. Specifically we will prove what is known as the second case, where the main difficulty is the need for Kummer's Lemma, which states that:
\begin{lemma}
Let $p$ be an odd regular prime and let $\zeta \in \C$ be a primitive $p$-th root of unity. If $u \in \Z[\zeta]^\times$ is a unit such that $u \equiv a \bmod p$ for an integer $a$, then there exists $v \in \Z[\zeta]^\times$ such that $u = v^p$.
\end{lemma}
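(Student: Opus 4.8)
The plan is to argue by contradiction, exploiting the dictionary between cyclic degree-$p$ Kummer extensions and the class group that is packaged by Hilbert's Theorems 90--94. First I would record the harmless reduction that it suffices to show $u$ is a $p$-th power in $K^\times$, where $K = \Q(\zeta)$: if $u = v^p$ with $v \in K^\times$, then $v$ is a root of the monic polynomial $x^p - u$ and so is integral over $\OK$, whence $v \in \OK$; applying the same to $x^p - u^{-1}$ gives $v^{-1} \in \OK$, so $v \in \Z[\zeta]^\times$. Thus assume for contradiction that $u \notin (K^\times)^p$. Since $\zeta \in K$, the field $L = K(\sqrt[p]{u})$ is then a cyclic extension of degree $p$, with $\Gal(L/K) \cong \mu_p$ via Kummer theory.

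The heart of the argument is to prove that $L/K$ is unramified at every place. The infinite places are harmless because $K$ is totally imaginary, so every archimedean place is complex and cannot ramify. For a finite prime $\mathfrak{q}$ of $\OK$ not lying over $p$, the extension is at worst tamely ramified, and tame ramification in a Kummer extension $K(\sqrt[p]{u})$ is governed by $v_{\mathfrak{q}}(u) \bmod p$; since $u$ is a unit we have $v_{\mathfrak{q}}(u) = 0$, so $\mathfrak{q}$ is unramified. This leaves the single prime $\gothp = (1 - \zeta)$ above $p$, which is totally, and wildly, ramified in $K/\Q$ with $v_{\gothp}(p) = p - 1$.

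The \emph{main obstacle} is precisely the unramifiedness at $\gothp$, and this is exactly where the hypothesis $u \equiv a \bmod p$ is used. Working in the completion $K_{\gothp} = \Q_p(\zeta)$, the congruence reads $u \equiv a \pmod{\gothp^{\,p-1}}$ with $p \nmid a$, so $u$ agrees with a rational (hence $p$-adically transparent) unit to high $\gothp$-adic precision. I would then invoke the local criterion for $K_{\gothp}(\sqrt[p]{u})/K_{\gothp}$ to be unramified, namely that $u$ becomes a $p$-th power modulo a sufficiently high power of $\gothp$. The relevant threshold is forced by the wild ramification: the $p$-power map sends $1 + \gothp^{m}$ into $1 + \gothp^{m + p - 1}$ for $m \ge 2$, while at $m = 1$ a delicate cancellation occurs because $\zeta \in 1 + \gothp$ satisfies $\zeta^p = 1$. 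Establishing that $u \equiv a \bmod p$ lands $u$ in the unramified class, equivalently that the relative different $\mathfrak{d}_{L/K}$ is trivial at $\gothp$ after passing to a suitable generator such as $(\sqrt[p]{u} - a_0)/(1 - \zeta)$ for an appropriate $a_0 \in \OK$, is the fiddly local computation on which the whole proof turns.

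Once $L/K$ is known to be an everywhere-unramified cyclic extension of degree $p$, I would finish with Hilbert's Theorem 94: an unramified cyclic extension of prime degree $p$ forces a nontrivial ideal class of $K$ of order $p$ to capitulate in $L$, and in particular $p \mid h_K$. This is where Hilbert's Theorems 90--94 stand in for class field theory: Hilbert 90 and its cohomological consequences for the unit group feed the ambiguous-class-number bookkeeping of 91--93, whose output is the divisibility statement of 94. But $p \mid h_K$ contradicts the assumption that $p$ is regular, i.e.\ $p \nmid h_K$. Hence $u \in (K^\times)^p$, and by the reduction above $u = v^p$ for some $v \in \Z[\zeta]^\times$, as required.
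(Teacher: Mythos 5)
Your strategy is the same as the paper's (and as Washington's second proof): show that $L=K(u^{1/p})$ would be an everywhere unramified cyclic extension of degree $p$, then contradict regularity via Hilbert's Theorem 94. The archimedean places, the primes away from $p$, and the final appeal to Hilbert 94 are all handled correctly. The genuine gap is at the step you yourself single out as the crux, unramifiedness at $\gothp=(1-\zeta)$. The hypothesis $p\mid u-a$ is a congruence modulo $\gothp^{p-1}$, whereas the local criterion you invoke requires $u\equiv c^p \pmod{\gothp^{p}}$, one step deeper in the unit filtration, and no ``fiddly local computation'' can bridge that step, because the implication is false for local units. Your own filtration analysis shows why: $p$-th powers of principal units lie in $1+\gothp^{p+1}$, so the local units congruent to a $p$-th power modulo $\gothp^{p}$ form a subgroup of index $p$ inside those congruent to a rational integer modulo $\gothp^{p-1}$. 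Concretely, the rational integer $1+p$ is congruent to an integer to arbitrary $\gothp$-adic precision, yet $K_\gothp(\sqrt[p]{1+p})/K_\gothp$ is totally ramified. Equally, your proposed generator $(\sqrt[p]{u}-a_0)/(1-\zeta)$ has minimal polynomial $\bigl(((1-\zeta)X+a_0)^p-u\bigr)/(1-\zeta)^p$, whose coefficients are integral only when $\gothp^{p}\mid u-a_0^p$, which is exactly the congruence you do not have.

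The missing ingredient is global, not local: one must use that $u$ is a unit of $\OK$, not merely that $v_\gothp(u)=0$. Since $\Norm_{K/\Q}(u)=\pm1$ and $\Norm_{K/\Q}(u)\equiv a^{p-1}\equiv 1 \pmod p$, in fact $\Norm_{K/\Q}(u)=1$. Writing $u=a+pt$ and expanding the norm over the $p-1$ Galois conjugates gives $a^{p-1}+pa^{p-2}\Tr_{K/\Q}(t)\equiv 1 \pmod{p^2}$, while the binomial theorem gives $u^{p-1}\equiv a^{p-1}-pa^{p-2}t \pmod{p^2}$; since every Galois conjugate of $t$ is congruent to $t$ modulo $\gothp$, we get $\Tr_{K/\Q}(t)\equiv (p-1)t\equiv -t \pmod{\gothp}$, and combining these yields $u^{p-1}\equiv 1 \pmod{p\gothp}$, i.e.\ modulo $\gothp^{p}$. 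This is precisely how the paper proceeds: it proves the key claim under the stronger hypothesis $(\zeta-1)^p \mid w-1$ (where your local analysis does apply, via the polynomial $P_w$), and separately reduces the general statement to it by replacing $u$ with $u^{p-1}$ — a reduction whose entire content is this global norm--trace argument — recovering $u$ at the end from $v'^p=u^{p-1}$ as $u=(u/v')^p$. Without this step, or some equivalent use of the unit hypothesis, your proof of unramifiedness at $\gothp$ does not go through.
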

While there are several proofs of this result, many make use of class field theory or the  $p$-adic class number formula, which are beyond what is currently available in \mathlib. For this reason, we have instead opted for a more `elementary' proof, requiring only some basic results about group cohomology, specifically Hilbert's Theorems 90-94.

Here is an outline of the paper. In Section~\ref{sec: notation} we fix the informal and formal notation that will be used throughout the paper, explaining the very first issues that appear in the formalization process. In Section~\ref{sec: case 1}, we recall how in~\cite{case1} we formalized Theorem~\ref{thm: case 1}, reducing FLT to Theorem~\ref{thm: case 2}. The main inputs for the proof are described in the following diagram:

\begin{figure}[htbp]
\centering
\begin{tikzpicture}[
    node distance=1.8cm and 1.5cm, 
    every node/.style={
        draw,
        rectangle,
        thick,
        align=center,
        minimum height=1cm,
        text width=4cm 
    },
    main_topic/.style={ 
        fill=blue!10,
        font=\bfseries
    },
    support_topic/.style={ 
        fill=green!10
    },
    arrow/.style={-Stealth, thick, rounded corners=2pt}
]


\node[main_topic]    (B)              {\textbf{Hilbert 91}};
\node[main_topic]    (C) [below=of B] {\textbf{Hilbert 92 (Thm 7.1)}};
\node[main_topic]    (D) [below=of C] {\textbf{Hilbert 94 (Thm 6.1 + 6.2)}};
\node[main_topic]    (E) [below=of D] {\textbf{Kummer's Lemma (Thm 5.1)}};
\node[main_topic]    (J) [below=of E] {\textbf{FLT for regular primes (Case II)}};

\node[main_topic] (K) [right=of C, xshift=-1cm] {\textbf{Hilbert 90 (Thm 8.1)}};
\node[support_topic] (F) [left=of C, xshift=1cm] {\textbf{Ramification of infinite places}};

\node[support_topic] (H) [left=of D, xshift=1cm] {\textbf{Unramified Extensions}};

\node[support_topic] (G) [right=of E, xshift=1cm, yshift=0.7cm] {\textbf{Different Ideal \& Ramification}};
\node[support_topic] (I) [right=of E, xshift=1cm, yshift=-0.7cm] {\textbf{Kummer Extensions}};


\draw[arrow] (B) -- (C);
\draw[arrow] (C) -- (D);
\draw[arrow] (D) -- (E);
\draw[arrow] (E) -- (J);
\draw[arrow] (K.west) -- (C.east);

\draw[arrow] (F.east) -- (C.west);
\draw[arrow] (H.east) -- (D.west);

\draw[arrow] (G.west) -- (E.east);
\draw[arrow] (I.west) -- (E.east);

\end{tikzpicture}
\caption{Main inputs into the proof.}
\label{fig:mermaid_to_tikz}
\end{figure}

The formalization process works backwards from the final goal, gradually reducing the results to known theorems. As we will see, working backwards is key since it ensures all our results will combine to prove our theorem. For this reason we follow a similar structure here, where results are introduced when needed and we gradually reduce our proofs, firstly to Kummer's lemma, then to Hilbert 94, 92,91, and 90 in that order.

Specifically, in Section~\ref{sec: case 2} we reduce the proof of Case 2 to Kummer's lemma, Theorem~\ref{thm: kummer}. This is mathematically not very difficult, but the proof is rather intricate, and its formalization is quite challenging. We then move on to the heart of our work, the formalization of Kummer's lemma. In Section~\ref{sec: kummer}, we prove Kummer's lemma \emph{assuming Hilbert's Theorem 94}, a cohomological result that is historically a precursor of global class field theory. In particular, the only thing that remains to be proven is Theorem~\ref{thm: Hilbert 94 2}: if $L/K$ is an unramified extension of number fields of odd prime degree, then $[L:K]$ divides the class number of $K$. 

Starting with Section~\ref{sec: Hilbert94}, our work is basically independent of FLT, and only concerns the cohomology of number fields. We then reduce Theorem~\ref{thm: Hilbert 94 2} to Theorem~\ref{thm: Hilbert 94 1}, and the latter to Hilbert's Theorems 92 and 90, Theorems~\ref{thm: Hilbert 92} and~\ref{thm: Hilbert 90 concr}. In Section~\ref{sec: Hilbert92} we prove Hilbert's Theorem 92, which is the hardest part of our formalization. The proof we formalize is a simplification of Hilbert's original proof, but it is still subtle and requires a lot of care. In particular it depends on nontrivial results about the units of the ring of integers of number fields. In Section~\ref{sec: Hilbert90} we explain the formalization of the version of Hilbert's Theorem 90 that we need, using what is already in \mathlib. This finishes the proof of Kummer's lemma, and hence the formalization of FLT for regular primes. In Section~\ref{sec: form stuff} we discuss some of the issues we encountered in the formalization process, and how we solved them. Finally, in Section~\ref{sec: future} we discuss some future work.

As is clear from the above discussion, the paper works backwards from the final result we want to prove to the results we need. This reflects our formalization process: the final goal, FLT for regular primes, was clear from the beginning and the statement was very easy to formalize. Also the fact that we needed to split the proof into two cases was clear, as all the known approaches follow this strategy. Once Case 1 was done, we moved to Case 2, and again all known approaches use Kummer's lemma. A key observation is that, even if the proof is difficult, the formalization of the statement is easy, so we decided to finish the work assuming Kummer's lemma. Now, as there are several possible ways of proving Theorem~\ref{thm: kummer} that involve various complicated objects (cohomology groups, $L$-functions etc.) it was key to choose an approach that could be formalized in a reasonable time-frame. 

Working backwards, we were sure that we were really making progress at any step of the formalization (of course there is always the risk of ending up with something very hard to prove, but in this way we minimize the risk of formalizing unneeded  \emph{definitions}). This is a very different strategy with respect to projects like the Liquid Tensor Experiment \cite{LTE} where the \emph{statement} of the result is already difficult to formalize, not only its proof.
\subsection*{What we learned:}
We found that going back to early proofs of theorems and recasting the main ideas with relatively more modern tools, can be highly beneficial for formalization, since these require much less advanced machinery, which in turn simplifies the formalization process. In particular, while Hilbert's theorem 90 is very well-known, theorems 91,92, 94 (of \cite{Hilbert}) were unknown to the authors before this project, but they are the key to our approach.

The project also uncovered several issues with Lean's \mathlib{}, for example, the definition of rings of integers has since changed, due in part to this project highlighting the fact that they would not scale well (as described in Section 9). In particular, it demonstrates the need to test definitions by proving non-trivial theorems.
\subsection*{Acknowledgements} Our project has greatly benefited from the contributions of many members of the \mathlib{} community. We are deeply grateful to all those who have been involved in the development of both \Lean and \mathlib{}, whose efforts have made our work possible. We would also like to extend our sincere thanks to the anonymous referees for their thorough and thoughtful review of an earlier version of this paper, which led to significant improvements in its clarity and presentation.

\section{Informal and formal notation} \label{sec: notation}
Throughout the paper, $p$ will be an odd regular prime. Although FLT is a statement about the integers, the proofs all require significant input from algebraic number theory. Here is our basic setup. Let $\mu_p \subset \C$ be the set of $p$-th roots of unity. We fix $\zeta \in \mu_p$ a primitive root of unity and we will write $K$ for the number field $\Q(\zeta)$. By the definition of regularity, $p$ does not divide the class number of $K$. The ring of integers of $K$ will be denoted $\OK$ and note that we have $\OK = \Z[\zeta]$ (a fact that we formalized in~\cite{case1}).

Even if our main result concerns natural numbers and the field $\Q(\zeta)$ only plays an auxiliary role, it appears in most of the results and so we need a formalized definition of $\Q(\zeta)$ that works well in practice. As is customary in formalization projects (see for example \cite[Section 4.3]{schemes}), it is better to avoid working directly with the field $\Q(\zeta)$ (that would be \lean{Algebra.adjoin ℚ \{ζ\}}), otherwise all our results would apply only to that extension of $\Q$, and not to any extension that is abstractly isomorphic to it. Instead, we work with any $p$-th cyclotomic extension of $\Q$, as follows.
\begin{lstlisting}
    variable {p : ℕ+} {K : Type*} [Field K] [NumberField K] [IsCyclotomicExtension {p} ℚ K]
\end{lstlisting}
The class \lean{IsCyclotomicExtension}\href{https://github.com/leanprover-community/mathlib4/blob/4705244150fe57b4006d3393c5040d3d74e024b4/Mathlib/NumberTheory/Cyclotomic/Basic.lean#L67-L78}{\faExternalLink}, that is a fundamental prerequisite for our formalization, has
been introduced into \mathlib{} during the first steps of our project (see~\cite{case1}). The instances
\lean{[Field K]} and \lean{[NumberField K]} ensure that $K$ is an extension of $\Q$ that is generated by a primitive $p$-th root of unity. Note that the assumption that $K$ is a number field is redundant, as it is implied by $K$ being a $p$-th cyclotomic extension of $\Q$: this result is easily proved in \Lean{}, but it cannot be an instance (since \Lean{} has no way of guessing $p$ when it sees a goal of type \lean{NumberField K}) and so \lean{NumberField K} is not found automatically by \Lean{} and in practice it is more convenient to assume it explicitly. For the same reason, we sometimes use \lean{[CharZero K]} instead of \lean{[NumberField K]} (since some results in \mathlib{} use this formulation). Note that in this project $p$ is of type \lean{(p : ℕ+)}, which is the type of positive natural numbers, and in particular it cannot be used where a natural number is expected (for example one cannot write \lean{x ^ p} directly): this is a rather standard annoyance caused by type theory and \Lean{} has a powerful system to help ignoring these issues (see for example \cite{coercions} for more details). Despite the coercions system, we are sometimes forced to use the coercion \lean{(↑p : ℕ)} explicitly to see $p$ as a natural number. The origin of these problems is that \lean{IsCyclotomicExtension} takes as first argument a set \lean{(S : Set ℕ+)} (this is a design decision made at the beginning of the formalization of the theory of cyclotomic extensions in \mathlib, and our work highlighted the need for it to be refactored, which has now been completed).

Instead of fixing a primitive $p$-th root of unity $\zeta$ once and for all, it is more convenient in \Lean{} to work with an unspecified primitive root.
\begin{lstlisting}
variable {ζ : K} (hζ : IsPrimitiveRoot ζ p)
\end{lstlisting}
One issue that appears is that now \lean{(ζ : K)} has type $K$, and not $\OK$ or $\OK^\times$, forcing us to use several coercions. To make the formalization process as smooth as possible, we usually consider \lean{(hζ.unit' : OOˣ)}, that is the same as $\zeta$, but seen as a unit of the ring of integers. After having set the relevant \lean{@[simp]} and \lean{@[norm\_cast]} tags, we can use various tactics like \lean{push\_cast} or \lean{field\_simp} to avoid mathematically irrelevant annoyances.

The ring of integers of $K$ is denoted by \lean{OO K}\href{https://github.com/leanprover-community/mathlib4/blob/4705244150fe57b4006d3393c5040d3d74e024b4/Mathlib/NumberTheory/NumberField/Basic.lean#L85-L95}{\faExternalLink}, and \mathlib{} contains an extensive library of results about it. For example, the fact that $\OK$ is a Dedekind domain\href{https://github.com/leanprover-community/mathlib4/blob/4705244150fe57b4006d3393c5040d3d74e024b4/Mathlib/NumberTheory/NumberField/Basic.lean#L276-L277}{\faExternalLink} and the fact that $\OK = \Z[\zeta]$\href{https://github.com/leanprover-community/mathlib4/blob/4705244150fe57b4006d3393c5040d3d74e024b4/Mathlib/NumberTheory/Cyclotomic/Rat.lean#L152-L159}{\faExternalLink} are both formalized: the former in the work~\cite{class_group}. To state that $p$ is a prime, we use \lean{[Fact p.Prime]} (here \lean{(p : ℕ)}): indeed \lean{Nat.Prime} is not a class, but using \lean{Fact} we can record primality as an instance and use typeclass inference. Mathematically, $p$ is regular if the ring of integers of any $p$-th cyclotomic extension has class number not divisible by $p$. This is clearly equivalent to the same property for \emph{one fixed model of a $p$-th cyclotomic extension of $\Q$} and we decided to formalize the definition this way, using \lean{CyclotomicField p ℚ}. This makes it simpler in practice to show that a given number is regular (the point of having a general $K$ in the project is that most of the theorems concern a cyclotomic extension, so we don't want them to hold only for \lean{CyclotomicField p ℚ}). Here is our definition of being a regular prime\href{https://github.com/leanprover-community/flt-regular/blob/34cb497affb3355f8db9bb06ae271cb4b5f10c3b/FltRegular/NumberTheory/RegularPrimes.lean#L25-L31}{\faExternalLink}.
\begin{lstlisting}
    /-- A natural number `n` is regular if `n` is coprime with the cardinal of the class group -/
    def IsRegularNumber (n : ℕ) [hn : Fact (0 < n)] : Prop :=
      n.Coprime <| Fintype.card <| ClassGroup (OO <| CyclotomicField ⟨n, hn.1⟩ ℚ)

    def IsRegularPrime (p : ℕ) [Fact p.Prime] : Prop := IsRegularNumber p
\end{lstlisting}
Strictly speaking the definition \lean{IsRegularPrime} is not needed and the main result can be stated using \lean{IsRegularNumber} only: we find nevertheless the use of \lean{IsRegularPrime} closer to informal mathematics. Note that we define the notion of being regular for any positive integer as being coprime with the class number of the corresponding cyclotomic extension, but this is used in practice only for prime numbers: in particular we do not claim that Kummer's lemma holds for regular numbers, it is for example false for $n=4$. (This is customary in formalized mathematics, since primality is not needed to state the condition, we do not assume it, even if this generality is not mathematically interesting.) Our main theorem
is then\href{https://github.com/leanprover-community/flt-regular/blob/34cb497affb3355f8db9bb06ae271cb4b5f10c3b/FltRegular/FltRegular.lean#L7-L17}{\faExternalLink} the following:
\begin{lstlisting}
    /-- Fermat's last theorem for regular primes. -/
theorem flt_regular {p : ℕ} [Fact p.Prime] (hreg : IsRegularPrime p) (hodd : p ≠ 2) :
    FermatLastTheoremFor p := by ...
\end{lstlisting}
Here \lean{FermatLastTheoremFor p}\href{https://github.com/leanprover-community/mathlib4/blob/4705244150fe57b4006d3393c5040d3d74e024b4/Mathlib/NumberTheory/FLT/Basic.lean#L51-L52}{\faExternalLink} is the statement, existing in \mathlib,
\begin{lstlisting}
∀ a b c : ℕ, a ≠ 0 → b ≠ 0 → c ≠ 0 → a ^ p + b ^ p ≠ c ^ p
\end{lstlisting}
Note that technically \lean{FermatLastTheoremFor} is a statement about natural numbers, but the equivalence to the same statement for integers or rationals already exists in \mathlib{}\href{https://github.com/leanprover-community/mathlib4/blob/4705244150fe57b4006d3393c5040d3d74e024b4/Mathlib/NumberTheory/FLT/Basic.lean#L84-L85}{\faExternalLink}.

\section{Case 1} \label{sec: case 1}
The proof of FLT for regular primes is split into two sub-cases. The so called ``first case'' of FLT is the following (we use $\Z$ in the project rather than $\N$ and we switch to natural numbers at the end to prove \lean{FermatLastTheoremFor})
\begin{theorem}[Case 1]\label{thm: case 1}
	Let $a,b,c \in \Z$ and $p$ a regular prime. If $p$ does not divide $a b c$,  then $a^p+b^p \ne c^p$.
\end{theorem}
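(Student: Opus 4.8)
The plan is to work inside the ring of integers $\OK = \Z[\zeta]$, where a sum of two $p$-th powers factors completely. Assuming for contradiction that $a^p + b^p = c^p$ with $p \nmid abc$, I would first reduce to the case $\gcd(a,b) = 1$ by dividing out any common factor, and then use the identity
\[
\prod_{j=0}^{p-1}(a + \zeta^j b) = a^p + b^p = c^p,
\]
valid because $p$ is odd. The whole strategy is to convert this factorization into a statement about ideals, extract a $p$-th power, and then exploit regularity to make that power principal.

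Next I would show that the principal ideals $(a + \zeta^j b)$ are pairwise coprime in $\OK$. If a prime $\gothp'$ divided two of them, say for exponents $i \neq j$, then the $\OK$-linear combinations $(a+\zeta^i b)-(a+\zeta^j b)$ and $\zeta^j(a+\zeta^i b)-\zeta^i(a+\zeta^j b)$ show $\gothp' \mid (\zeta^i - \zeta^j) b$ and $\gothp' \mid (\zeta^i - \zeta^j) a$; since $\gcd(a,b) = 1$ this forces $\gothp' \mid (\zeta^i - \zeta^j)$, an associate of $1 - \zeta$. But $(1-\zeta)$ is the unique prime $\gothp$ above $p$, and $\gothp \mid (c)^p$ would give $p \mid c$, contradicting $p \nmid abc$. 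With coprimality established, unique factorization of ideals in the Dedekind domain $\OK$ lets me write each $(a + \zeta^j b) = \gotha_j^p$ for some ideal $\gotha_j$.

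Now regularity enters: in the class group $[\gotha_j]^p = [(a+\zeta^j b)] = 1$, and since $p \nmid h_K$ the order of $[\gotha_j]$ divides $\gcd(p, h_K) = 1$, so each $\gotha_j$ is principal. Taking $j = 1$ gives $a + \zeta b = u\,\alpha^p$ with $u \in \OK^\times$ and $\alpha \in \OK$. I would then invoke two standard facts: every unit of $\Z[\zeta]$ has the form $u = \zeta^r v$ with $v$ fixed by complex conjugation (so $u / \bar u = \zeta^{2r}$), and for every $\alpha \in \Z[\zeta]$ one has $\alpha^p \equiv m \pmod{p\OK}$ for some rational integer $m$ (by expanding the $p$-th power modulo $p$ together with $\zeta^p = 1$). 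Reducing $a + \zeta b = \zeta^r v\,\alpha^p$ modulo $p$, conjugating, and eliminating the real quantity $vm$ yields the congruence
\[
a + b\zeta - a\zeta^{2r} - b\zeta^{2r-1} \equiv 0 \pmod{p\OK}.
\]

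The final and most delicate step is to derive a contradiction from this congruence, and I expect it to be the main obstacle. Using that $\{1, \zeta, \dots, \zeta^{p-2}\}$ is an integral basis, subject to the single relation $\sum_{j=0}^{p-1}\zeta^j = 0$, a $\Z$-combination of the powers $\zeta^0, \zeta^1, \zeta^{2r}, \zeta^{2r-1}$ can lie in $p\OK$ only under strong constraints on its coefficients. When the four exponents $0, 1, 2r, 2r-1$ are distinct modulo $p$, one reads off $p \mid a$ and $p \mid b$ directly, contradicting $p \nmid abc$. The genuine difficulty is the case analysis when some of these exponents coincide modulo $p$ (for instance $2r \equiv 1$, giving $(a-b)(1-\zeta) \equiv 0$ and hence $p \mid a - b$ once $p \geq 3$): here one must combine the resulting congruences with the symmetry of $a^p + b^p = c^p$ under permuting and negating $a, b, -c$ to force $p \mid abc$ in every branch. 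This combinatorial congruence bookkeeping, rather than any deep algebra, is the part most likely to be technically involved and error-prone in a formalization.
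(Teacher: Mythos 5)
Your outline is the classical Kummer argument, and for $p \ge 5$ it is essentially the proof behind the paper's formalization: the paper defers Case 1 to its companion work, whose listed ingredients (explicit norms and discriminants of cyclotomic fields, $\OK = \Z[\zeta]$, and above all the decomposition of units as $u = \zeta^n x$ with $x$ real) are exactly the tools you invoke, and your reduction to the congruence $a + b\zeta - a\zeta^{2r} - b\zeta^{2r-1} \equiv 0 \pmod{p\OK}$ followed by a case analysis on the exponents $0, 1, 2r, 2r-1$ is the standard completion of that argument.

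There is, however, a genuine gap at the small regular primes, which the statement as written includes. First, $2$ is a regular prime (the class number of $\Q(\zeta_2)=\Q$ is $1$) and the theorem carries no oddness hypothesis, while your factorization $\prod_{j}(a+\zeta^j b) = a^p + b^p$ requires $p$ odd; this case needs the separate (trivial) parity observation that the paper itself points out. More substantively, $3$ is regular, and your claim that permuting and negating $a, b, -c$ forces a contradiction in every branch fails for $p = 3$. In the problematic branch $2r \equiv 1 \pmod p$ you get $p \mid a - b$; running the argument on $a^p + (-c)^p = (-b)^p$ gives, in its own worst branch, only $p \mid a + c$; combining these with $a + b \equiv c \pmod p$ (Fermat's little theorem applied to the original equation) yields $3a \equiv 0 \pmod p$. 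For $p \ge 5$ this contradicts $p \nmid a$, but for $p = 3$ it says nothing: the residue pattern $a \equiv b \equiv -c \pmod 3$ is consistent with every congruence your method can produce. (Note also that for $p = 3$ the generic branch never occurs, since four exponents cannot be pairwise distinct modulo $3$, and your basis lemma anyway needs at most $p-1$ distinct powers of $\zeta$.) Closing $p = 3$ requires a genuinely different elementary argument --- cubes are $\pm 1 \pmod 9$, so $a^3 + b^3 \not\equiv c^3 \pmod 9$ once $3 \nmid abc$ --- which is how classical treatments dispose of it, and it is consistent with the fact that the formalization effort described in the paper treats the exponent $3$ by a dedicated proof.
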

Here is the formalized statement\href{https://github.com/leanprover-community/flt-regular/blob/34cb497affb3355f8db9bb06ae271cb4b5f10c3b/FltRegular/CaseI/Statement.lean#L250-L252}{\faExternalLink}
\begin{lstlisting}
theorem caseI {a b c : ℤ} {p : ℕ} [Fact p.Prime] (hreg : IsRegularPrime p)
    (caseI : ¬↑p ∣ a * b * c) : a ^ p + b ^ p ≠ c ^ p := ...
\end{lstlisting}
Note that for $p=2$ it is easy to see that case 1 cannot hold since the sum of two odd numbers is necessarily even.  The formalization of this result was completed on October 13th 2022, a little less than one year later than the starting date of the project. This work is described in~\cite{case1}, and we will only briefly list the main bits we formalized: 

\begin{enumerate}
    \item Definitions of cyclotomic fields and cyclotomic extensions.\footnote{To date these are the only explicit examples of number fields in \mathlib, although some work in the quadratic case has been done \cite{formclassgps}.}
    \item Explicit formulas for norms, trace (previously defined in \cite{class_group}) as well as defining discriminants for number fields.
    \item Computing the discriminant and ring of integers for $p$-power cyclotomic fields.
    \item Specific results about units in these cyclotomic fields. For example that for $p \ne 2$ a prime, every unit $u \in \Z[\zeta_p]^\times$ can be written as $u=x \zeta_p^n$ for some $n \in \Z$ and $x \in \Z[\zeta_p]^\times$ such that $x \in \mathbb{R}$.
    \item The proof of FLT for $n=3$, which was done independently by one of us (RVdV). Note this case had already been done previously in Isabelle \cite{Isabelle_flt34}.
\end{enumerate}

Our formalization was entirely in \Leant{}, and shortly after the end of this task, the transition from \Leant{} to \Leanf{} started. 
Once \mathlib{} moved completely to \Leanf{}, we started updating \fltregular. For this, we used \href{https://github.com/leanprover-community/mathport}{mathport}, a tool developed by the \mathlib{} community (especially by Mario Carneiro and Gabriel Ebner) to port \Leant{} projects to \Leanf{}. Porting \fltregular, we encountered mainly two issues.
\begin{itemize}
    \item Coercions (especially from $\OK^\times$ to $\OK$ to $K$) are omnipresent in our project, see~\cite[Section~3]{case1}. \Leanf{} handles coercions in a completely different way from \Leant{}, so part of the code had to be rewritten by hand. Although tedious, the same issue had appeared in porting \mathlib{}, so the solutions were well understood.
    \item The way typeclass inference handles multiple inheritance changed in \Leanf{}, see for example~\cite{eric}. This caused slowdown in \fltregular and more generally in \mathlib{}\footnote{In the beginning we simply increased the time limit allocations in the project, via the \lean{set_option maxHeartbeats} and \lean{set_option synthInstance.maxHeartbeats} commands, and we worked with the slow code, but this was only a temporary solution.}. Thanks also to our work (and of course of the \mathlib{} community at large), these issues are now solved in \mathlib{}. These solutions found in \fltregular{} have been now almost completely integrated into mathlib (see Section~\ref{sec: form stuff} below for more details).
\end{itemize}
Another important aspect of our work, described in~\cite[Section 4]{case1}, is integration into \mathlib{}. We now have more than 200 pull requests merged that were originally in \fltregular. Notably, almost all the prerequisites needed to formalize the proof of Case 1 are now in \mathlib{} (with the exception of homogenization of polynomials) and we have already started the same process for Case 2. Even if the whole proof of FLT for regular primes is unlikely to be included into \mathlib{} (being very technical and specific) we expect that all prerequisites will be upstreamed. Besides improving \mathlib{} itself, we saw, especially during the port, that this process reduces the work of maintaining the code. This process can be very tedious due to the continuous development/expansion of \mathlib{}, meaning that code is commonly refactored (as better methods/approaches are developed) which leads to projects, not yet included into mathlib, breaking when updated to newer versions of \mathlib{}. This is in contrast to other systems with more structured numbered releases.

\section{Case 2} \label{sec: case 2}
Given the formalization of Theorem~\ref{thm: case 1} outlined in Section~\ref{sec: case 1}, to formalize the proof of FLT for regular primes it is enough to consider the following (which is known as Case 2 of FLT for regular primes):

\begin{theorem}[Case 2]\label{thm: case 2}
	Let $p \neq 2$ be a regular prime. Then $x^p+y^p=z^p$ has no solutions such that
	\begin{itemize}
	    \item $xyz \neq 0$.
	    \item $\gcd(x, y, z) = 1$.
	    \item $p \mid xyz$.
	\end{itemize}
\end{theorem}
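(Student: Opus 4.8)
The plan is to transfer the problem into the ring of integers $\OK = \Z[\zeta]$ and to run a Kummer-style infinite descent, with Kummer's lemma (Theorem~\ref{thm: kummer}) supplying the one genuinely arithmetic ingredient. Since $\gcd(x,y,z)=1$ and $p \mid xyz$, exactly one of $x,y,z$ is divisible by $p$; after permuting the variables and adjusting signs I may assume $p \mid z$ and $p \nmid xy$. Setting $\lambda \colonequals 1 - \zeta$, the prime $p$ factors in $\OK$ as $p = u\,\lambda^{p-1}$ with $u \in \OK^\times$, and $(\lambda)$ is the unique prime above $p$; the quantity to descend on is the $\lambda$-adic valuation of $z$.

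To make the induction self-contained I would replace the original equation by the more flexible statement over $\OK$: that
\[
  \alpha^p + \beta^p = \varepsilon\,\lambda^{pn}\gamma^p
\]
has no solution with $\alpha,\beta,\gamma \in \OK$ nonzero and coprime to $\lambda$, with $\varepsilon \in \OK^\times$, $\alpha,\beta$ coprime, and $n \geq 1$. The original equation reduces to this after extracting the power of $\lambda$ from $z$. The core of the argument is the factorization
\[
  \prod_{k=0}^{p-1}(\alpha + \zeta^k\beta) = \varepsilon\,\lambda^{pn}\gamma^p .
\]
Because any two factors differ by $(\zeta^j - \zeta^k)\beta$, which is a unit times $\lambda$ times something coprime to $\lambda$, the factors are pairwise coprime away from $\lambda$; moreover all of them are divisible by $\lambda$ (as $\alpha^p + \beta^p \equiv \alpha + \beta \equiv 0 \pmod{\lambda}$), and the valuation count forces exactly one factor to carry the entire surplus $\lambda$-divisibility while the others have $\lambda$-valuation exactly $1$. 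Dividing out these predictable powers of $\lambda$, the residual ideals are pairwise coprime and their product is a $p$-th power, so each residual ideal $\mathfrak{a}_k$ is a $p$-th power in the class group. Here regularity enters: since $p \nmid h_K$, the class $[\mathfrak{a}_k]$ is killed by $p$ and hence trivial, so every $\mathfrak{a}_k$ is principal, say $(\delta_k)$.

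For the descent I would exploit the linear relation $(\alpha + \zeta\beta) + \zeta(\alpha + \zeta^{-1}\beta) = (1+\zeta)(\alpha + \beta)$ among three of the factors. Writing each factor as a unit times a power of $\lambda$ times $\delta_k^p$, dividing through by the common $\lambda$, and collecting units, this becomes an equation of the shape
\[
  \delta_1^{\,p} + w\,\delta_{-1}^{\,p} = \varepsilon'\,\lambda^{p(n-1)}\delta_0^{\,p},
\]
where $w,\varepsilon' \in \OK^\times$. This is almost the target statement with $n$ replaced by $n-1$, except for the stray unit $w$. Reducing modulo $p$ and using that every $p$-th power in $\OK$ is congruent to a rational integer modulo $p$ (together with the high $\lambda$-, hence $p$-, divisibility of the right-hand side when $n \geq 2$), one sees that $w$ itself is congruent to a rational integer modulo $p$. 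Kummer's lemma (Theorem~\ref{thm: kummer}) then yields $w = v^p$ for some $v \in \OK^\times$, so $w\,\delta_{-1}^{\,p} = (v\delta_{-1})^p$ and we obtain a genuine solution of the generalized equation with exponent $n-1$. A minimal-counterexample argument, treating the residual case $n=1$ by a direct congruence, completes the descent and proves the theorem.

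I expect the main obstacle to be neither Kummer's lemma (taken as a black box here) nor regularity, but the ideal-theoretic bookkeeping in the middle: proving that exactly one factor absorbs the full $\lambda^{pn}$, that the residual ideals are genuinely coprime, and—most delicately—that the unit $w$ produced on reassembly is congruent to an integer modulo $p$, so that Kummer's lemma is even applicable. Getting these coprimality and congruence conditions right up to the correct power of $\lambda$ and the correct unit, rather than off by a factor, is precisely the intricate part that the authors flag as challenging to formalize.
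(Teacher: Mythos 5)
Your proposal is correct and is essentially the proof the paper formalizes: both follow the Borevich--Shafarevich descent (the paper cites Section V.7.1 of that book), reducing to $p \mid z$, passing to the generalized equation $x^p + y^p = \varepsilon(1-\zeta)^{p(m+1)}z^p$ over $\OK$, using regularity to make the ideal-theoretic $p$-th roots of the coprime residual factors principal, and applying Kummer's lemma to the stray unit after checking it is congruent to a rational integer modulo $p$. (One wording slip worth fixing: what regularity kills is the class of the $p$-th root $\mathfrak{b}_k$ of your residual ideal $\mathfrak{a}_k$ --- the ideal $\mathfrak{a}_k$ itself is principal by construction since it is a quotient of principal ideals --- but since you then write each factor as a unit times a power of $\lambda$ times $\delta_k^p$, you are using the correct conclusion and the argument stands.)
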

Our formalization is the following\href{https://github.com/leanprover-community/flt-regular/blob/34cb497affb3355f8db9bb06ae271cb4b5f10c3b/FltRegular/CaseII/Statement.lean#L82-L85}{\faExternalLink}
\begin{lstlisting}
/-- CaseII. -/
theorem caseII {a b c : ℤ} {p : ℕ} [hpri : Fact p.Prime]
    (hreg : IsRegularPrime p) (hodd : p ≠ 2)
    (hprod : a * b * c ≠ 0) (hgcd : ({a, b, c} : Finset ℤ).gcd id = 1)
    (caseII : ↑p ∣ a * b * c) : a ^ p + b ^ p ≠ c ^ p := by ...
\end{lstlisting}
Here is the strategy of the proof that we have formalized (where the main technical step follows~\cite[Section~V.7.1]{borevich}). Let $x$, $y$ and $z$ be coprime integers such that $p \mid xyz$ and $x^p+y^p=z^p$. First note that $p$ divides exactly one of $x,y,z$, and by an easy change of variables we can assume that $p \mid z$. Since we have to show that there is no nontrivial solution with $\gcd(x,y,z) = 1$ and $p \mid z$, it is enough to prove that there is no nontrivial solution such that $p \nmid y$ and $p \mid z$. The statement we are going to prove is then\href{https://github.com/leanprover-community/flt-regular/blob/34cb497affb3355f8db9bb06ae271cb4b5f10c3b/FltRegular/CaseII/Statement.lean#L50-L51}{\faExternalLink}
\begin{lstlisting}
lemma not_exists_Int_solution {p : ℕ} [Fact (p.Prime)] (hreg : IsRegularPrime p)
    (hodd : p ≠ 2) :
  ¬∃ (x y z : ℤ), ¬↑p ∣ y ∧ ↑p ∣ z ∧ z ≠ 0 ∧ x ^ p + y ^ p = z ^ p := by ...
\end{lstlisting}
In $\OK$ we have that $p=(1-\zeta)^{p-1} \varepsilon$ for some $\varepsilon \in \OK^\times$ and the norm of $\zeta - 1$ is $p$, in particular\href{https://github.com/leanprover-community/flt-regular/blob/34cb497affb3355f8db9bb06ae271cb4b5f10c3b/FltRegular/NumberTheory/Cyclotomic/MoreLemmas.lean#L78}{\faExternalLink} $p$ divides an integer $n$ in $\Z$ if and only if $\zeta - 1$ divides $n$ in $\OK$. It follows that it is enough to prove\href{https://github.com/leanprover-community/flt-regular/blob/34cb497affb3355f8db9bb06ae271cb4b5f10c3b/FltRegular/CaseII/Statement.lean#L26-L28}{\faExternalLink}
\begin{lstlisting}
lemma not_exists_solution' : {p : ℕ+} [Fact (p.Prime)]
  (hreg : (p : ℕ).Coprime <| Fintype.card <| ClassGroup (OO K))
  ¬∃ (x y z : OO K), ¬(hζ.unit' : OO K) - 1 ∣ y ∧ (hζ.unit' : OO K) - 1 ∣ z ∧ z ≠ 0 ∧
    x ^ (p : ℕ) + y ^ (p : ℕ) = z ^ (p : ℕ) := by ...
\end{lstlisting}
Note that here \lean{(p : ℕ+)} (since we need cyclotomic extensions) and then we are forced to consider the coercion \lean{(p : ℕ)} to take exponentiation. Also, \lean{hreg} is in terms of $K$ rather than \lean{CyclotomicField p ℚ} (mathematically the two notions are of course equivalent): this is because we are proving results about $K$, so it is more convenient to work with it directly. Since ultimately we will apply our results to the case \lean{K = CyclotomicField p ℚ}, this causes no problems.

Write $z = p^k z_0$ with $z_0 \in \Z$ such that $p \nmid z_0$. It suffices to prove that the more general equation
\[
x^p+y^p=\varepsilon(\zeta-1)^{p(m+1)}z_0^p
\]
has no nontrivial solutions with $x,y,z_0 \in \OK$ such that $y$ and $z_0$ are not divisible by $1-\zeta$. Here $m \ge 0$ and $\varepsilon \in \OK^\times$. The proof follows by a descent argument (which will crucially depend on Kummer's lemma) where one shows that if such a solution exists for some $m$ then one can construct a solution with $m$ replaced by $m-1$.

We now give more details about the proof, notably explaining where Kummer's lemma is needed. Assume we have $x,y,z \in \OK$ and $\varepsilon \in \OK^\times$ such that $y$ and $z$ are not divisible by $1-\zeta$ and that we have
\begin{equation} \label{eq: gen}
x^p+y^p=\varepsilon(1-\zeta)^{p(m+1)}z^p
\end{equation}
for some natural number $m$.  In what follows we will keep this choice of $\zeta$ fixed and when discussing an arbitrary $p$-th root of unity we will denote them by $\eta$.

Let $\gothz$ and $\gothp$ denote the ideals generated by $z$ and $\zeta - 1$ respectively. Note that $\gothp$ is the only prime ideal of $\OK$ above $p$. Then factoring~\eqref{eq: gen} as ideals we have
\[ \label{eq: gen fact}
\prod_{\eta \in \mu_p} (x+\eta y) = (\zeta-1)^{p(m + 1)} \gothz^p.
\]
It follows that each $x+\eta y$ is divisible by $(\zeta-1)$ in $\OK$\href{https://github.com/leanprover-community/flt-regular/blob/34cb497affb3355f8db9bb06ae271cb4b5f10c3b/FltRegular/CaseII/InductionStep.lean#L83-L84}{\faExternalLink}. The idea now is to divide each side by $(\zeta-1)^p$ and then after some manipulations, show that we can create a solution with a smaller power of $m$ on the right hand side. Following \cite{BorSha}, one then produces a solution to an equation of the form:

\begin{equation} \label{eq: formula almost}
u_1 x'^p + y'^p=u_2(\zeta-1)^{pm}z'^p.
\end{equation}
Since we already know that $m \geq 1$, we are done \emph{if we can prove that $u_1$ is a $p$-th power}. This is the crucial step where we need Kummer's lemma, Theorem~\ref{thm: kummer} below. Thanks to it, it is enough to show that $u_1$ is congruent to some integer modulo $p$. To prove this, note that since $(p) = \gothp^{p-1}$, there is\href{https://github.com/leanprover-community/flt-regular/blob/34cb497affb3355f8db9bb06ae271cb4b5f10c3b/FltRegular/NumberTheory/Cyclotomic/MoreLemmas.lean#L27}{\faExternalLink} a unit $u \in \OK^\times$ such that $p u = (\zeta - 1)^{p-1}$. Substituting in~\eqref{eq: formula almost}, we see that $p \mid u_1 x'^p + y'^p$. This implies that\href{https://github.com/leanprover-community/flt-regular/blob/34cb497affb3355f8db9bb06ae271cb4b5f10c3b/FltRegular/CaseII/InductionStep.lean#L376-L379}{\faExternalLink} there is some $a \in \OK$ such that $u_1$ is congruent to $a^p$ modulo $p$. We conclude since for any $x \in \OK$, we have that\href{https://github.com/leanprover-community/flt-regular/blob/34cb497affb3355f8db9bb06ae271cb4b5f10c3b/FltRegular/NumberTheory/Cyclotomic/MoreLemmas.lean#L58}{\faExternalLink} $x ^ p$ is congruent to an integer modulo $p$.

With the above setup and setting $\pi = \zeta-1$, the formalization of this argument concludes that\href{https://github.com/leanprover-community/flt-regular/blob/34cb497affb3355f8db9bb06ae271cb4b5f10c3b/FltRegular/CaseII/InductionStep.lean#L562-L564}{\faExternalLink} we have (see Section~\ref{sec: notation} for an explanation of the various notations):

\begin{lstlisting}
variable {K : Type*} {p : ℕ+} [hpri : Fact p.Prime] [Field K]
  [NumberField K] [IsCyclotomicExtension {p} ℚ K] (hp : p ≠ 2) [Fintype (ClassGroup (OO K))] 
  (hreg : (p : ℕ).Coprime <| Fintype.card <| ClassGroup (OO K)) {ζ : K} (hζ : IsPrimitiveRoot ζ p) 
  {x y z : OO K} {ε : (OO K)ˣ} {m : ℕ} (hy : ¬ hζ.unit'.1 - 1 ∣ y) (hz : ¬ hζ.unit'.1 - 1 ∣ z)
  (e : x ^ (p : ℕ) + y ^ (p : ℕ) =   -- i.e. assume we have a solution with m+1
                ε * ((hζ.unit'.1 - 1) ^ (m + 1) * z) ^ (p : ℕ))

lemma exists_solution' : ∃ (x' y' z' : OO K) (ε₃ : (OO K)ˣ), ¬ π ∣ y' ∧ ¬ π ∣ z' ∧
    x' ^ (p : ℕ) + y' ^ (p : ℕ) = ε₃ * (π ^ m * z') ^ (p : ℕ) --then we get a solution with m
\end{lstlisting}

\section{Kummer's lemma} \label{sec: kummer}
Kummer's lemma states the following:
\begin{theorem} \label{thm: kummer}
Let $p$ be an odd regular prime and let $\zeta$ be a primitive $p$-th root of unity. If $u \in \OK^\times$ is a unit such that $u \equiv n \bmod p$ for an integer $n$, then there exists $v \in \OK^\times$ such that $u = v^p$.
\end{theorem}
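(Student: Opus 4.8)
The plan is to argue by contraposition: I will assume that $u$ is \emph{not} a $p$-th power in $\OK^\times$ and produce an everywhere-unramified cyclic extension of $K$ of degree $p$, which is impossible for a regular prime. Since $K$ contains the full group $\mu_p$ of $p$-th roots of unity, Kummer theory applies: setting $L = K(\sqrt[p]{u})$, the assumption that $u$ is not a $p$-th power guarantees that $L/K$ is a cyclic Galois extension of degree exactly $p$. The whole argument then reduces to showing that $L/K$ is unramified at every place, after which Hilbert's Theorem 94 (Theorem~\ref{thm: Hilbert 94 2}) applies: an unramified extension of odd prime degree $p = [L:K]$ forces $p$ to divide the class number $h_K$ of $K$. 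As $p$ is regular we have $p \nmid h_K$, giving the desired contradiction, so $u$ must be a $p$-th power.

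The ramification analysis splits into three cases, corresponding to the support results feeding into this theorem in Figure~\ref{fig:mermaid_to_tikz}. At the infinite places there is nothing to check: for $p$ odd the field $K = \Q(\zeta)$ is totally complex, so every archimedean place is complex and therefore automatically unramified in any extension. At a finite prime $\mathfrak{q} \nmid p$, the extension $K(\sqrt[p]{u})/K$ is unramified precisely because the $\mathfrak{q}$-adic valuation of the radicand is a multiple of $p$; here this is immediate, since $u \in \OK^\times$ is a unit and hence $v_{\mathfrak{q}}(u) = 0$. The only remaining, and genuinely delicate, case is the unique prime $\gothp = (\zeta - 1)$ above $p$, where the naive Kummer criterion is not enough because $u$ is itself a $\gothp$-unit and $p$ ramifies in $K$.

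The crux of the proof is therefore to show that $L/K$ is unramified at $\gothp$, and this is exactly where the congruence hypothesis $u \equiv n \bmod p$ enters. First one checks that $p \nmid n$ (otherwise $u \equiv 0 \bmod \gothp$, contradicting that $u$ is a unit), so $n$ is invertible modulo $\gothp$. The claim is that the congruence of $u$ to a rational integer modulo $p = \gothp^{p-1}$ forces $u$ to agree with a $p$-th power to sufficiently high $\gothp$-adic precision for the local extension $K_{\gothp}(\sqrt[p]{u})/K_{\gothp}$ to be unramified; concretely this is a statement about the image of $u$ in the local unit group $\mathcal{O}_{K_\gothp}^\times$ modulo $p$-th powers. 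Establishing it is the main obstacle, and it is precisely what the ``Kummer Extensions'' and ``Different Ideal \& Ramification'' inputs in the diagram are for: one computes the different (equivalently, the conductor) of the degree-$p$ Kummer extension attached to the local unit $u$ and shows it is trivial under the above congruence, using the explicit structure of $p$ in $\Z[\zeta]$, namely $(p) = \gothp^{p-1}$ and $p = \varepsilon(1-\zeta)^{p-1}$ for a unit $\varepsilon$. This local computation, together with the two easy cases above, yields global unramifiedness.

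Once $L/K$ is known to be unramified of degree $p$, the endgame is immediate: Hilbert's Theorem 94 gives $p \mid h_K$, contradicting the regularity of $p$. Hence no such $L$ can exist, i.e.\ $u \in (\OK^\times)^p$, which is the statement of Kummer's lemma. I expect essentially all of the real work—and the subtle arithmetic demanding the most care in a formalization—to be concentrated in the unramifiedness at $\gothp$, since the infinite and away-from-$p$ places are routine and the final divisibility step is handed to us by Theorem~\ref{thm: Hilbert 94 2}.
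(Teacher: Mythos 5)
Your architecture is the same as the paper's: assume $u$ is not a $p$-th power, form the degree-$p$ Kummer extension $L = K(u^{1/p})$, prove $L/K$ is everywhere unramified, and let Theorem~\ref{thm: Hilbert 94 2} contradict regularity. Your handling of the archimedean places and of the primes $\mathfrak{q} \nmid p$ is also correct. The problem is the step you defer to a ``local conductor computation'' at $\gothp = (\zeta-1)$: as you have formulated it, that claim is false. Congruence of a unit to a rational integer modulo $p$, i.e.\ modulo $\gothp^{p-1}$, is \emph{not} sufficient $\gothp$-adic precision for unramifiedness. The threshold in the local Kummer criterion sits at $\gothp^{p}$ (one power higher: $pe/(p-1) = p$ since $e = p-1$ here), and the gap between $\gothp^{p-1}$ and $\gothp^{p}$ is exactly where the difficulty of Kummer's lemma lives. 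Concretely, the local unit $1 + (\zeta-1)^{p-1}$ is congruent to the integer $1$ modulo $p$, yet $K_\gothp\bigl((1+(\zeta-1)^{p-1})^{1/p}\bigr)/K_\gothp$ is a ramified degree-$p$ extension: every $p$-th power of a local unit is a $(p-1)$-st root of unity times a unit congruent to $1$ modulo $\gothp^{p}$, and $1+(\zeta-1)^{p-1}$ lies in no such coset. So no argument that only remembers the image of $u$ in the local unit group together with your stated congruence can succeed; some further global input about $u$ is indispensable.

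The missing idea is a preliminary reduction, which is in fact the first step of the paper's proof. Replace $u$ by $w = u^{p-1}$; this changes nothing, because $u^{p-1} = u^{p}\cdot u^{-1}$ generates the same Kummer extension, but it upgrades the congruence to $(\zeta-1)^{p} \mid w - 1$. (Sketch: since $\OK/\gothp \cong \Z/p\Z$, the integer $n$ may be adjusted so that $u \equiv n \bmod \gothp^{p}$; all $\Gal(K/\Q)$-conjugates of $u$ then satisfy the same congruence, so $1 = \Norm_{K/\Q}(u) \equiv n^{p-1} \bmod \gothp^{p}$, whence $w \equiv n^{p-1} \equiv 1 \bmod \gothp^{p}$ --- note how both Fermat-type information and the globality of $u$, through its norm, enter.) Under this stronger hypothesis the paper then proves unramifiedness not place by place but uniformly at all finite primes: the polynomial $P_w = \bigl(((\zeta-1)X-1)^p + w\bigr)/(\zeta-1)^p$ has coefficients in $\OK$ \emph{precisely because} $(\zeta-1)^p \mid w-1$, is monic of degree $p$, irreducible, has roots generating $L$, and is separable modulo every maximal ideal of $\OK$; the different-ideal criterion then yields unramifiedness everywhere at once, with no completions or conductors needed. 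So your outline is the right one, but the single step you flag as ``the main obstacle'' cannot be closed from where you stand; it requires the congruence-upgrading reduction $u \mapsto u^{p-1}$ that you omitted.
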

Nowadays, one can use more modern tools, such as class field theory to prove Kummer's lemma as follows.
\begin{proof}[Proof of Theorem~\ref{thm: kummer}]
Let us consider the extension $K(u^{1/p})/K$: it is a Kummer, hence abelian, extension. By a rather elementary argument (see~\cite[second proof of Theorem 5.36]{washington} or the proof at the end of this section), the extension is everywhere unramified (here is where the assumption on $u$ is needed), so $K(u^{1/p})$ is contained in $L$, the Hilbert class field of $K$. By class field theory, $[L : K]$ is the class number of $K$ and in particular $p \nmid [L : K]$. It follows that the degree of $K(u^{1/p})/K$ cannot be $p$: the only other possibility is that the degree is $1$, so $K(u^{1/p}) = K$ and $u$ has a $p$-th root in $K$ and hence in $\OK^\times$.
\end{proof}
The formalization of global class field theory is a long-term project, but we are currently quite far from it, making it impossible to formalize such a proof. There are other proofs, for example using the $p$-adic regulator, but again their formalization is unfeasible at the moment. There is also Kummer's original proof \cite{kummer1} which uses the Bernoulli characterization of regular primes (see \ref{thm: bernouilli}), making the proof more elementary at the cost of proving \ref{thm: bernouilli}, which itself relies on versions of the analytic class number formula for cyclotomic fields \cite{kummer2}.

For these reasons we instead followed a more elementary proof relying on some basic ramification results and Hilbert's Theorems 90, 92 and 94 (see~\cite{Hilbert} for the original formulation). Those are historically the starting point of modern class field theory, so in a sense our formalized proof is the one above, but written in much more down-to-earth terms.

The formalization of the statement is the following (notations as in Section~\ref{sec: notation})\href{https://github.com/leanprover-community/flt-regular/blob/34cb497affb3355f8db9bb06ae271cb4b5f10c3b/FltRegular/NumberTheory/KummersLemma/KummersLemma.lean#L37-L40}{\faExternalLink}
\begin{lstlisting}
theorem eq_pow_prime_of_unit_of_congruent (u : (OO K)ˣ)
    (hcong : ∃ n : ℤ, (p : OO K) ∣ (u - n : OO K)) : ∃ v, u = v ^ (p : ℕ) := by ...
\end{lstlisting}
Note we decided to spell out explicitly the condition that $u$ is congruent to $n$ modulo $p$, without actually using $\equiv$ in the formal statement. This is of course equivalent, and we felt that the API was better suited to work with division (we could also have worked with ideals, but again this is more cumbersome).

We now describe in more detail the proof we formalized. Let $u \in \OK^\times$ and $n \in \Z$ such that $u \equiv n \bmod p$. Suppose for a moment that if $w \in \OK^\times$ is a unit such that $(\zeta -1)^p \mid w-1$ then there exists some $v \in K$ such that $v^p=w$. Admitting this claim, we see that with our setup we have $u \equiv a \mod p$ for some integer $a$. Now this means that $u^{p-1} \equiv 1 \mod p$, which means that $u^{p-1} \equiv 1 \mod (\zeta-1)^{p-1}$ and hence $u^{p-1} \equiv 1 \mod (\zeta-1)^p$. Using the claim we then have a $v'$ (which must necessarily be a unit) such that $v'^p = u^{p-1}$. So choosing $v=u/v'$ we see that $v^p = u$ as required.

So it remains to prove the claim, which is formalized as follows\href{https://github.com/leanprover-community/flt-regular/blob/34cb497affb3355f8db9bb06ae271cb4b5f10c3b/FltRegular/NumberTheory/KummersLemma/KummersLemma.lean#L14-L15}{\faExternalLink}
\begin{lstlisting}
theorem exists_pow_eq_of_zeta_sub_one_pow_dvd_sub_one {u : (OO K)ˣ}
    (hcong : (hζ.unit' - 1 :OO K) ^ (p : ℕ) ∣ (u : OO K) - 1) : ∃ v : K, v ^ (p : ℕ) = u := by...
\end{lstlisting}
To see this, assume for contradiction that the claim is false, so $u$ is not a $p$-th power and let $L = K(u^{1/p})$. The extension $L/K$ is a Kummer extension that by our assumption has degree $p$. Note that there was no theory of Kummer extensions in \mathlib{} when the project began, and we developed it from the ground up. This new material is now fully integrated in the library and it is fairly complete: for example \mathlib{} knows that $K(u^{1/p})/K$ is Galois with cyclic Galois group.

We now claim that $L/K$ is unramified. Again we had to develop the theory of unramified extensions from scratch, and this material is now in the process of being integrated into \mathlib{}. The formal statement we use to prove unramifiedness is the following\href{https://github.com/leanprover-community/flt-regular/blob/34cb497affb3355f8db9bb06ae271cb4b5f10c3b/FltRegular/NumberTheory/KummersLemma/Field.lean#L303-L304}{\faExternalLink}
\begin{lstlisting}
lemma isUnramified (L) [Field L] [Algebra K L]
    [IsSplittingField K L (X ^ (p : ℕ) - C (u : K))] : IsUnramified (OO K) (OO L) := by ...
\end{lstlisting}
Here $L$ is the splitting field of $X ^ p - u$ over the cyclotomic field $K$. Consider the polynomial
\[
P_u = \frac{((\zeta - 1)X - 1)^p+u}{(\zeta - 1)^p} \in K[X]
\]
First of all we have that\href{https://github.com/leanprover-community/flt-regular/blob/34cb497affb3355f8db9bb06ae271cb4b5f10c3b/FltRegular/NumberTheory/KummersLemma/Field.lean#L16-L18}{\faExternalLink} $P_u \in \OK[X]$, it is monic\href{https://github.com/leanprover-community/flt-regular/blob/34cb497affb3355f8db9bb06ae271cb4b5f10c3b/FltRegular/NumberTheory/KummersLemma/Field.lean#L58}{\faExternalLink} and has degree $p$\href{https://github.com/leanprover-community/flt-regular/blob/34cb497affb3355f8db9bb06ae271cb4b5f10c3b/FltRegular/NumberTheory/KummersLemma/Field.lean#L66}{\faExternalLink}. Moreover, in $K[X]$ we have that\href{https://github.com/leanprover-community/flt-regular/blob/34cb497affb3355f8db9bb06ae271cb4b5f10c3b/FltRegular/NumberTheory/KummersLemma/Field.lean#L72-L73}{\faExternalLink}
\[
P_u = \left(X - (\zeta - 1)^{-1} \right)^p + \left(\frac{u}{\zeta - 1} \right)^p.
\]
Since we are assuming that $u$ is not a $p$-th power we have that\href{https://github.com/leanprover-community/flt-regular/blob/34cb497affb3355f8db9bb06ae271cb4b5f10c3b/FltRegular/NumberTheory/KummersLemma/Field.lean#L101-L102}{\faExternalLink} $P_u$ is irreducible over $K$. Moreover\href{https://github.com/leanprover-community/flt-regular/blob/34cb497affb3355f8db9bb06ae271cb4b5f10c3b/FltRegular/NumberTheory/KummersLemma/Field.lean#L139-L143}{\faExternalLink}, the roots of $P_u$ are the
\[
\alpha_m \colonequals \frac{1-\zeta^m u^{1/p}}{\zeta - 1},
\]
for $m = 0, \ldots, p-1$ and so $P_u$ is the minimal polynomial of any of the $\alpha_m$ over $K$ and hence\href{https://github.com/leanprover-community/flt-regular/blob/34cb497affb3355f8db9bb06ae271cb4b5f10c3b/FltRegular/NumberTheory/KummersLemma/Field.lean#L220-L222}{\faExternalLink}, by Gauss's lemma, over $\OK$. An explicit computation shows that $X - \alpha_{m_1}$ and $X - \alpha_{m_2}$ are coprime if $m_1 \neq m_2$, and thus\href{https://github.com/leanprover-community/flt-regular/blob/34cb497affb3355f8db9bb06ae271cb4b5f10c3b/FltRegular/NumberTheory/KummersLemma/Field.lean#L228-L230}{\faExternalLink} $P_u$ is a separable polynomial\footnote{meaning it is coprime to its derivative.} over $\OL$ and in particular\href{https://github.com/leanprover-community/flt-regular/blob/34cb497affb3355f8db9bb06ae271cb4b5f10c3b/FltRegular/NumberTheory/KummersLemma/Field.lean#L265-L265}{\faExternalLink} it is separable over $\OK/I$ for any maximal ideal $I \subseteq \OK$. Since clearly $L = K(\alpha_m)$ and $\alpha_m$ has separable polynomial modulo any maximal ideal of $\OK$, we have that\href{https://github.com/leanprover-community/flt-regular/blob/34cb497affb3355f8db9bb06ae271cb4b5f10c3b/FltRegular/NumberTheory/Unramified.lean#L193-L197}{\faExternalLink} the extension $L/K$ is everywhere unramified. Note that the proof of unramifiedness, although not very complex, is quite intricate, and it required us to develop the theory of unramified extensions. In doing so we formalized, among other things the following (for arbitrary extensions of number fields $L/K$):
\begin{enumerate}
    \item Definition of the relative different ideal $\mathfrak{d}_{L/K} \colonequals  ((\OL)^*)^{-1}$ where
    \[
    M^* = \{ \alpha \in L | \; \Tr_{L/K} (\alpha M) \in \mathcal{O}_K\}.
    \]
    This is now in \mathlib{}\href{https://github.com/leanprover-community/mathlib4/blob/4705244150fe57b4006d3393c5040d3d74e024b4/Mathlib/RingTheory/DedekindDomain/Different.lean#L392-L396}{\faExternalLink}
    \item Proving that if $L/K$ is an extension of number fields and $S$ denote the set of $\alpha \in \OL$ such that $L=K(\alpha)$, then\href{https://github.com/leanprover-community/mathlib4/blob/4705244150fe57b4006d3393c5040d3d74e024b4/Mathlib/RingTheory/DedekindDomain/Different.lean#L534-L536}{\faExternalLink} 
    \[
    \mathfrak{d}_{L/K} = \left (m_{\alpha}'(\alpha) : \alpha \in S \right )
    \]
    where $m_\alpha$ is denotes the minimal polynomial of $\alpha$.
    \item If $\mathfrak{p}_K, \mathfrak{p}_L$ prime ideals in $L,K$ respectively, with $\mathfrak{p}_L^e \parallel \mathfrak{p}_K$ for $e >0$, then\href{https://github.com/leanprover-community/mathlib4/blob/4705244150fe57b4006d3393c5040d3d74e024b4/Mathlib/RingTheory/DedekindDomain/Different.lean#L627-L630}{\faExternalLink}
    \[
    \mathfrak{p}_L^{e-1} \mid \mathfrak{d}_{L/K}.
    \]
    \item If $L = K(\alpha)$ and $m_\alpha$ is separable modulo a prime ideal $\mathfrak{p}_K$, then\href{https://github.com/leanprover-community/flt-regular/blob/34cb497affb3355f8db9bb06ae271cb4b5f10c3b/FltRegular/NumberTheory/Unramified.lean#L193-L197}{\faExternalLink} $\mathfrak{p}_K$ is unramified in $\OL$.
\end{enumerate}
The situation is now the following: we have a unit $u \in \OK^\times$ that is not a $p$-th power and moreover the extension $K(u^{1/p})/K$ is everywhere unramified. We need to find a contradiction. Note that we have yet to use regularity of $p$. It is unlikely that one can finish the proof without some sort of class field theory result that allows one to deduce information about the class group from the existence of an unramified abelian extension. The one we formalize is Hilbert's Theorem 94, whose second part, Theorem~\ref{thm: Hilbert 94 2} below, finishes the proof since by regularity $p$ cannot divide the class number of $K$.

\section{Hilbert's Theorem 94} \label{sec: Hilbert94}
Hilbert's Theorem 94 has two parts to it. Continuing our backwards reasoning, we start with the second one, that we used above to finish the proof of unramifiedness.
\begin{theorem} \label{thm: Hilbert 94 2} [Hilbert's Theorem 94, part 2]
Let $L/K$ be an unramified cyclic finite extension of number fields of odd prime degree, then $[L:K]$ divides the class number of $K$.
\end{theorem}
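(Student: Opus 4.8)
Since the degree $[L:K] = p$ is prime, the claim is equivalent to $p \mid h_K$, where $h_K = |\mathrm{Cl}_K|$ is the class number of $K$. My plan is to locate a nontrivial $p$-torsion subgroup inside $\mathrm{Cl}_K$, namely the \emph{capitulation kernel} of the extension-of-ideals map $j \colon \mathrm{Cl}_K \to \mathrm{Cl}_L$, and to identify it cohomologically. Write $G = \Gal(L/K) = \langle \sigma \rangle$, a cyclic group of order $p$.

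The central step is to construct an isomorphism
\[
\ker(j) \;\cong\; \widehat{H}^{-1}\bigl(G, \OL^\times\bigr) = \frac{\{\, u \in \OL^\times : \Norm_{L/K}(u) = 1 \,\}}{\{\, v^{\sigma - 1} : v \in \OL^\times \,\}}.
\]
To build the map, take a class in $\ker(j)$ represented by an ideal $\gotha$ of $\OK$ with $\gotha\OL = (\alpha)$ principal in $\OL$; since $\gotha$ is $G$-invariant, $\sigma(\alpha)/\alpha$ is a unit, and a telescoping product shows it has norm $1$, giving a well-defined class in $\widehat{H}^{-1}(G,\OL^\times)$. Injectivity follows because extension of ideals $I_K \hookrightarrow I_L$ is injective. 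For surjectivity I would invoke Hilbert's Theorem 90: a norm-one unit $u$ may be written $u = \sigma(\alpha)/\alpha$ with $\alpha \in L^\times$, and then $\alpha\OL$ is $G$-invariant, hence of the form $\gotha\OL$ for an ideal $\gotha$ of $\OK$. This last point is exactly where unramifiedness enters, through the identification $I_L^{\,G} = I_K$ of $G$-invariant fractional ideals with ideals extended from $K$.

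With this isomorphism in hand the theorem follows quickly. Hilbert's Theorem 92 produces a unit of norm $1$ that is not of the form $v^{\sigma-1}$; that is, it asserts precisely that $\widehat{H}^{-1}(G,\OL^\times) \neq 0$, whence $\ker(j) \neq 0$. But $\ker(j)$ is a subgroup of the finite group $\mathrm{Cl}_K$, and as Tate cohomology of the order-$p$ group $G$ it is annihilated by $p$; a nontrivial finite abelian group of exponent $p$ has order divisible by $p$, so $p \mid |\ker(j)|$, and therefore $p \mid h_K$, as required.

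The genuinely hard input is Hilbert's Theorem 92 — the nonvanishing of $\widehat{H}^{-1}(G, \OL^\times)$ — which I would treat as a black box here and which the later sections establish; its proof is delicate and in turn requires control of the units of $\OL$ and of the archimedean places (these are automatically unramified, since the local degree at an infinite place lies in $\{1,2\}$ yet divides the odd number $p$ and so must equal $1$). By contrast, the isomorphism above is formal once Hilbert 90 and the unramified identity $I_L^{\,G} = I_K$ are available, and the concluding divisibility argument is elementary. I therefore expect essentially all of the difficulty to be concentrated in Hilbert 92, with the present theorem serving as a short bridge from Hilbert 90 and 92 to the divisibility statement.
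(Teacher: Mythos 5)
Your proposal is correct, and every essential ingredient in it is also the paper's: Hilbert 92 (Theorem~\ref{thm: Hilbert 92}) treated as a black box, Hilbert 90 (Theorem~\ref{thm: Hilbert 90 concr}) to turn a norm-one unit into $\sigma(\alpha)/\alpha$, and unramifiedness at the finite primes used exactly once, to identify $G$-invariant ideals of $\OL$ with ideals extended from $\OK$. What differs is the packaging. The paper decomposes the statement into a bare existence result, Theorem~\ref{thm: Hilbert 94 1} (one non-principal ideal of $\OK$ that becomes principal in $\OL$), and then concludes in contrapositive form: if $p$ were coprime to the class number, the relation $\Norm_{L/K}(I\OL)=I^{[L:K]}$ would force any capitulating ideal to be principal. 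You instead prove the stronger structural statement $\ker(j)\cong\widehat{H}^{-1}(G,\OL^\times)$ and conclude directly by Lagrange from the exponent-$p$ property; under this dictionary your surjectivity argument \emph{is} the paper's proof of Theorem~\ref{thm: Hilbert 94 1}, and your injectivity argument \emph{is} the paper's proof that the ideal it constructs is non-principal. Your route buys the full classical form of Hilbert's Theorem 94 — the capitulation kernel itself has order divisible by $p$, and is identified cohomologically — at the cost of the well-definedness and homomorphism checks for the isomorphism; the paper's leaner decomposition needs no quotient groups or Tate cohomology at this stage (only a single ideal and a coprimality lemma), a choice visibly made with formalization in mind. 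One small remark: your exponent-$p$ step invokes annihilation of Tate cohomology by $|G|$, but it also follows from the same norm relation $\Norm_{L/K}(\gotha\OL)=\gotha^{p}$ that the paper uses, which keeps the argument self-contained.
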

\begin{proof}
To prove the theorem, assume that $[L:K]$ does not divide the class number and note that if $I$ is any ideal of $\OK$ such that $I_L$, the extension to $\OL$ of $I$, is principal then $I$ is automatically principal. Indeed, we have $\Norm_{L/K}(I_L)=I^{[L:K]}$ and, $I_L$ being principal means that $I^{[L:K]}$ is principal\href{https://github.com/leanprover-community/flt-regular/blob/34cb497affb3355f8db9bb06ae271cb4b5f10c3b/FltRegular/NumberTheory/Hilbert94.lean#L77-L78}{\faExternalLink}. Since $[L:K]$ is coprime to the class number we have that $I$ is principal\href{https://github.com/leanprover-community/flt-regular/blob/34cb497affb3355f8db9bb06ae271cb4b5f10c3b/FltRegular/NumberTheory/RegularPrimes.lean#L89-L93}{\faExternalLink}. In particular, we see that it is enough to show that there is a non-principal ideal of $\OK$ that becomes principal in $\OL$. This is precisely the content of part 1 of Hilbert's Theorem 94, Theorem~\ref{thm: Hilbert 94 1} below.
\end{proof}

A formalization of the above argument is the following\href{https://github.com/leanprover-community/flt-regular/blob/34cb497affb3355f8db9bb06ae271cb4b5f10c3b/FltRegular/NumberTheory/Hilbert94.lean#L124-L137}{\faExternalLink}. We include also the proof, that is very short and follows our informal proof above. The declaration \lean{exists_not_isPrincipal_and_isPrincipal_map} is Theorem~\ref{thm: Hilbert 94 1} below.
\begin{lstlisting}
theorem dvd_card_classGroup_of_isUnramified_isCyclic {K L : Type*}
    [Field K] [Field L] [NumberField K] [NumberField L] [Algebra K L]
    [FiniteDimensional K L] [IsGalois K L] [IsUnramified (OO K) (OO L)] [IsCyclic (L ≃ₐ[K] L)]
    (hKL : Nat.Prime (finrank K L))
    (hKL' : finrank K L ≠ 2) :
    finrank K L ∣ Fintype.card (ClassGroup (OO K)) := by
  obtain ⟨I, hI, hI'⟩ := exists_not_isPrincipal_and_isPrincipal_map K L hKL hKL'
  have := Fact.mk hKL
  rw [hKL.dvd_iff_not_coprime]
  exact fun h ↦ hI (isPrincipal_of_isPrincipal_pow_of_coprime h
    (Ideal.isPrincipal_pow_finrank_of_isPrincipal_map hI'))
\end{lstlisting}
Note that \lean{IsCyclic (L ≃ₐ[K] L)} is automatic by the other assumptions, but we find it more convenient to explicitly add it. In the applications, it is found by typeclass inference, so this does not cause any issue.

We now move on to the first part of Hilbert's Theorem 94, which states that:
\begin{theorem}\label{thm: Hilbert 94 1}[Hilbert's Theorem 94, part 1]
Let $L/K$ be an unramified cyclic extension of number fields of odd prime degree, then there is a non-principal ideal of $\OK$ that becomes principal in $\OL$.
\end{theorem}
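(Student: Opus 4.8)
The plan is to build the required ideal directly out of Hilbert's Theorems~90 and~92 (Theorems~\ref{thm: Hilbert 90 concr} and~\ref{thm: Hilbert 92}), which together control the two relevant cohomology groups of the cyclic extension $L/K$. Fix a generator $\sigma$ of $\Gal(L/K)$. By Theorem~\ref{thm: Hilbert 92} there is a unit $\eta \in \OL^\times$ with $\Norm_{L/K}(\eta) = 1$ that is \emph{not} of the form $\sigma(\epsilon)/\epsilon$ for any unit $\epsilon \in \OL^\times$; this is the deep input, and it is the one place where the hypotheses on the degree (odd prime, unramified) are really spent, so we assume it here. Since $\Norm_{L/K}(\eta) = 1$, Theorem~\ref{thm: Hilbert 90 concr} applied to $L^\times$ produces some $\beta \in L^\times$ with $\eta = \sigma(\beta)/\beta$.

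Next I would pass to ideals. Consider the principal fractional ideal $(\beta)$ of $\OL$. Because $\sigma(\beta) = \eta\beta$ with $\eta$ a unit, we have $\sigma\bigl((\beta)\bigr) = (\sigma(\beta)) = (\beta)$, so $(\beta)$ is $\sigma$-invariant, hence invariant under all of $\Gal(L/K)$. Here unramifiedness enters decisively: for an unramified Galois extension every $\Gal(L/K)$-invariant ideal of $\OL$ is the extension $I\OL$ of an ideal $I$ of $\OK$ (concretely, $I = (\beta) \cap \OK$), because each prime of $K$ factors in $\OL$ with trivial ramification and $\Gal(L/K)$ permutes the primes above it transitively. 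Thus $(\beta) = I\OL$, and in particular $I\OL$ is principal.

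It remains to see that $I$ is non-principal, and this is exactly where the defining property of $\eta$ is used. Suppose $I = (\gamma)$ were principal, with $\gamma \in K^\times$. Then $I\OL = (\gamma)$ as well, so $(\beta) = (\gamma)$ in $\OL$, which forces $\beta = \gamma w$ for some unit $w \in \OL^\times$. Using $\sigma(\gamma) = \gamma$ we would get $\eta = \sigma(\beta)/\beta = \sigma(\gamma w)/(\gamma w) = \sigma(w)/w$, contradicting the choice of $\eta$. Hence $I$ is non-principal while $I\OL$ is principal, which is the assertion of Theorem~\ref{thm: Hilbert 94 1}. A final routine adjustment (clearing denominators, or phrasing the whole argument at the level of ideal classes) yields an \emph{integral} non-principal ideal if one insists on integrality.

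The main obstacle, apart from Theorem~\ref{thm: Hilbert 92} which the paper identifies as the hardest part and which we take as given, is the descent step: formalizing the correspondence between $\Gal(L/K)$-invariant (fractional) ideals of $\OL$ and ideals of $\OK$ for an unramified extension. This requires the unramified prime decomposition together with the transitivity of the Galois action on the primes lying over a fixed prime of $K$, and then checking that contraction followed by extension recovers the original invariant ideal — a statement that fails precisely at ramified primes. Managing the bookkeeping between fractional and integral ideals, and keeping the unit/coboundary conventions of Theorems~\ref{thm: Hilbert 90 concr} and~\ref{thm: Hilbert 92} aligned, is where most of the formalization care will go.
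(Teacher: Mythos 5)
Your proposal is correct and follows essentially the same route as the paper: Hilbert's Theorem 92 supplies the norm-one unit $\eta$ that is not a unit coboundary, Hilbert's Theorem 90 produces $\beta$ with $\sigma(\beta)/\beta = \eta$, the ideal $I = (\beta) \cap \OK$ is extended back to $(\beta)$ using unramifiedness (the same ``factor into primes and use Galois-invariance'' descent the paper invokes), and non-principality of $I$ follows by the same unit-substitution contradiction. The only differences are cosmetic: you work with fractional ideals and clear denominators at the end, and your coboundary conventions are inverted relative to the paper's, neither of which affects the argument.
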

\begin{proof}
This is a consequence of Hilbert's Theorem 92, Theorem~\ref{thm: Hilbert 92} below. Let $\eta \in \OK^\times$ be a unit with $\Norm_{L/K}(\eta) = 1$ and such that for all $w \in \OK^\times$ we have $\eta \neq w/\sigma(w)$, where $\sigma$ is a fixed generator of $\Gal(L/K)$. By Hilbert's Theorem 90, Theorem~\ref{thm: Hilbert 90 concr} below, there is $\beta \in \OK$ such that $\beta \neq 0$ and $\beta=\sigma(\beta)\eta$. We let $I \colonequals \OK \cap (\beta)$, the ideal of $\OK$ given by the ideal of $\OL$ generated by $\beta$. We claim that $I$ satisfies the required properties. Suppose for a moment that $I \OL = (\beta)$. This obviously implies that the extension of $I$ to $\OL$ is principal. If $I$ were principal generated by $\gamma$, there would be a unit $w \in \OK^\times$ such that $\beta = \gamma w$. Substituting into $\beta=\sigma(\beta)\eta$, since $\sigma(w)=w$ by $w \in \OK$, we get $\eta=\gamma/\sigma(\gamma)$ that is absurd. In particular $I$ is not principal.

We now prove that $I \OL = (\beta)$. Since $\sigma^{-1}(\beta)=\sigma^{-1}(\eta)\beta$, we have that\href{https://github.com/leanprover-community/flt-regular/blob/34cb497affb3355f8db9bb06ae271cb4b5f10c3b/FltRegular/NumberTheory/Hilbert94.lean#L24-L26}{\faExternalLink} $\sigma^{-1}(I) \OL = (\beta)$. Since $L/K$ is unramified, this implies that\href{https://github.com/leanprover-community/flt-regular/blob/34cb497affb3355f8db9bb06ae271cb4b5f10c3b/FltRegular/NumberTheory/Unramified.lean#L58-L60}{\faExternalLink} $I \OL = (\beta)$ (this is a technical point where we use the unramifiedness assumption to factor $I \OL$ into product of prime ideals). 
\end{proof}
Our formalization of the above theorem is the following\href{https://github.com/leanprover-community/flt-regular/blob/34cb497affb3355f8db9bb06ae271cb4b5f10c3b/FltRegular/NumberTheory/Hilbert94.lean#L111-L119}{\faExternalLink}.
\begin{lstlisting}
theorem exists_not_isPrincipal_and_isPrincipal_map (K L : Type*)
    [Field K] [Field L] [NumberField K] [NumberField L] [Algebra K L]
    [FiniteDimensional K L] [IsGalois K L] [IsUnramified (OO K) (OO L)] [h : IsCyclic (L ≃ₐ[K] L)]
    (hKL : Nat.Prime (finrank K L)) (hKL' : finrank K L ≠ 2) :
    ∃ I : Ideal (OO K), ¬I.IsPrincipal ∧ (I.map (algebraMap (OO K) (OO L))).IsPrincipal := by ...
\end{lstlisting}
The formal proof is a faithful translation of the informal one above.

\section{Hilbert's Theorem 92} \label{sec: Hilbert92}
The main technical input in our formalization of Kummer's lemma is the following result:
\begin{theorem}[Hilbert's Theorem 92] \label{thm: Hilbert 92}
Let $K/k$ be a cyclic extension of number fields of odd prime degree and let $\sigma$ be a generator of the Galois group. Then there exists a unit $u \in \OK^\times$ such that $\Norm_{K/k}(u)=1$ and for all $\varepsilon \in \OK^\times$ we have $u \ne \varepsilon/\sigma(\varepsilon)$.
\end{theorem}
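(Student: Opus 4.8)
The plan is to recognize the statement cohomologically and prove it by a Herbrand quotient computation. Write $G = \Gal(K/k) = \langle \sigma \rangle$ and $\ell = [K:k]$, an odd prime. Let $U_1 = \{u \in \OK^\times : \Norm_{K/k}(u) = 1\}$ be the group of units of norm $1$, and let $V = \{\varepsilon/\sigma(\varepsilon) : \varepsilon \in \OK^\times\}$; since $\varepsilon/\sigma(\varepsilon) = \varepsilon^{1-\sigma}$, the set $V$ is exactly the image of $\sigma - 1$ acting multiplicatively on $\OK^\times$, and one checks $V \subseteq U_1$ because the norm is $\sigma$-invariant. The theorem asserts precisely that $V \neq U_1$, i.e.\ that the finite group $U_1/V = \hat{H}^{-1}(G, \OK^\times)$ is nontrivial. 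My strategy is to show its order is divisible by $\ell$: I would compute the Herbrand quotient $q(\OK^\times) = |\hat{H}^{0}(G,\OK^\times)| / |\hat{H}^{-1}(G,\OK^\times)|$ and show it equals $1/\ell$, whence $|\hat{H}^{-1}| = \ell\,|\hat{H}^0| \geq \ell > 1$.

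First I would record the two formal properties that do all the work: the Herbrand quotient is multiplicative on short exact sequences of $G$-modules, and it is trivial on finite modules. Applying these to $1 \to \mu_K \to \OK^\times \to \OK^\times/\mu_K \to 1$ and using that $\mu_K$ is finite reduces the problem to computing $q(\Lambda)$, where $\Lambda = \OK^\times/\mu_K$ is a $G$-lattice. By Dirichlet's unit theorem the logarithmic embedding realises $\Lambda$ as a full lattice in the trace-zero hyperplane of $\mathbb{R}^{S_\infty(K)}$, $G$-equivariantly for the permutation action on the set $S_\infty(K)$ of archimedean places of $K$. Here the hypothesis that $\ell$ is odd is essential: every archimedean place of $k$ has local degree dividing $\ell$, and since the only nontrivial extension of $\mathbb{R}$ or $\mathbb{C}$ has degree $2 \nmid \ell$, every such place splits completely. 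Thus $S_\infty(K)$ is a free $G$-set and $\mathbb{Z}[S_\infty(K)] \cong \mathbb{Z}[G]^{s}$ (with $s = |S_\infty(k)|$) is induced, hence cohomologically trivial with $q = 1$. Feeding this into
\[
0 \to \Lambda_0 \to \mathbb{Z}[S_\infty(K)] \xrightarrow{\;\mathrm{sum}\;} \mathbb{Z} \to 0,
\]
and using $q(\mathbb{Z}) = \ell$ for the trivial module (as $\hat{H}^0(G,\mathbb{Z}) = \mathbb{Z}/\ell$ and $\hat{H}^{-1}(G,\mathbb{Z}) = 0$), multiplicativity gives $q(\Lambda_0) = 1/\ell$. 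Finally, since the Herbrand quotient of a $G$-lattice depends only on its rational representation (any two lattices with isomorphic rationalizations admit a $G$-homomorphism with finite kernel and cokernel, which contribute $1$), and $\Lambda \otimes \mathbb{Q} \cong \Lambda_0 \otimes \mathbb{Q}$ by the equivariance above, I conclude $q(\Lambda) = q(\Lambda_0) = 1/\ell$ and hence $q(\OK^\times) = 1/\ell$.

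This yields $|\hat{H}^{-1}(G, \OK^\times)| = \ell\,|\hat{H}^0(G,\OK^\times)| \geq \ell$, so a nonzero class exists and any representing unit $u$ satisfies $\Norm_{K/k}(u) = 1$ while $u \neq \varepsilon/\sigma(\varepsilon)$ for all $\varepsilon \in \OK^\times$, as required. I expect the main obstacle to be the Herbrand quotient computation of the unit group: making the logarithmic embedding $G$-equivariant and identifying $\Lambda \otimes \mathbb{Q}$ with the rational permutation module modulo its trivial summand requires the full strength of Dirichlet's unit theorem together with the structure theory of $G$-lattices (the invariance of $q$ under passing to a finite-index sublattice, essentially a Noether–Deuring argument), which is exactly the ``nontrivial results about the units of the ring of integers'' flagged in the introduction. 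By contrast, the splitting of the archimedean places — which isolates where oddness of the degree is used — and the cohomological input of Hilbert's Theorem 90 (giving $\hat{H}^1(K^\times) = 0$, useful if one instead routes the computation of $\hat{H}^0(\OK^\times)$ through the exact sequence relating $\OK^\times$, $K^\times$ and the ideals) I expect to be comparatively routine.
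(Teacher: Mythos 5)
Your argument is correct, but it is a genuinely different proof from the one the paper formalizes. You identify the statement with the nonvanishing of $\hat{H}^{-1}(\Gal(K/k),\OK^\times)$ and compute the Herbrand quotient $q(\OK^\times)=1/\ell$ via Dirichlet's unit theorem, the complete splitting of the archimedean places (the only point where oddness of $\ell$ enters), and the invariance of $q$ under rational isomorphism of $G$-lattices (Noether--Deuring, which you correctly flag as the delicate step, since equivariance of the logarithmic embedding a priori gives only an $\mathbb{R}[G]$-isomorphism). This is the classical cohomological proof found in Lang or Cassels--Fr\"ohlich, and it yields the sharper conclusion $\ell \mid |\hat{H}^{-1}(\Gal(K/k),\OK^\times)|$. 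The paper deliberately avoids this route: following Swinnerton-Dyer's observation that the cohomological point of view does not make this theorem easier, and because Tate cohomology, Herbrand quotients, cohomological triviality of induced modules, and Noether--Deuring are all absent from \mathlib{} (whose group cohomology is only the explicit low-degree cocycle/coboundary theory), it instead formalizes Hilbert's original argument: construct a fundamental system of $r+1$ units of minimal index in $\mathcal{U}_{K/k}$ (Hilbert 91, whose rank computation $(r+1)(p-1)$ rests on the same splitting of infinite places that you use), prove that no product $\prod_i h_i^{a_i}$ with some exponent prime to $p$ can equal $u/\sigma(u)$ (Lemma~\ref{key_lem}), and then exhibit the required unit explicitly as $J=\nu^{-a}\prod_i H_i^{a_i}$ by a root-of-unity bookkeeping argument that invokes Hilbert 90 (Theorem~\ref{thm: Hilbert 90 concr}, itself proved cohomologically). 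The trade-off is clear: your proof is much shorter modulo machinery and gives a quantitative strengthening, but that machinery would be a substantial formalization project in its own right; the paper's proof is long and intricate but consists entirely of concrete manipulations of units, which is exactly what made it feasible in Lean with the current state of \mathlib{}.
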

We formalize the above statement as\href{https://github.com/leanprover-community/flt-regular/blob/34cb497affb3355f8db9bb06ae271cb4b5f10c3b/FltRegular/NumberTheory/Hilbert92.lean#L845-L848}{\faExternalLink}:
\begin{lstlisting}
lemma Hilbert92
    [Algebra k K] [IsGalois k K] [FiniteDimensional k K](hKL : Nat.Prime (finrank k K)) 
    (hpodd : finrank k K ≠ 2) (σ : K ≃ₐ[k] K) (hσ : ∀ x, x ∈ Subgroup.zpowers σ) :
    ∃ u : (OO K)ˣ, Algebra.norm k (u : K) = 1 ∧ ∀ ε : (OO K)ˣ, (u : K) ≠ ε / (σ ε : K) :=
\end{lstlisting}
Note that by Hilbert's Theorem 90, Theorem~\ref{thm: Hilbert 90 concr} below, we know that any such $u$ must be of the form $w/\sigma(w)$ for $w \in \OK$. So the content is in showing that $w$ cannot always be chosen to be a unit. This translates to saying that a certain cohomology group does not vanish as opposed to the usual vanishing result associated with Hilbert's Theorem 90. As remarked in~\cite[Page 76]{swinnerton}, the cohomological point of view does not make the proof of this theorem easier, so we instead follow the classical proof described therein, which also has the advantage of being more amenable to formalization given the current state of \mathlib, but does rely on Hilbert 90, which we do prove by cohomological means.

The construction of a unit satisfying the requirements of Hilbert's Theorem 92 will be a consequence of the existence of a particular set of units.
\subsection{Fundamental system of units}

\begin{definition}
Let $\sigma$ be a generator of $\Gal(K/k)$ (with $K/k$ as in Theorem~\ref{thm: Hilbert 92}) and let $r$ be $\rk_{\Z}(\Ok^\times)$. Let $\mathcal{U}'_{K/k}= \OK^\times / \Ok^\times$ and let $\mathcal{U}'_{K/k,\tors}$ denote the torsion subgroup. Let
\[
\mathcal{U}_{K/k} = \mathcal{U}'_{K/k} / \mathcal{U}'_{K/k,\tors},
\]
which we note is naturally a $\Z[\Gal(K/k)]$-module. Then a fundamental system of $r+1$ units is a choice of $r+1$ elements $\{h_i\}$ of $\mathcal{U}_{K/k}$ such that
\[
\mathcal{U}_{K/k}/ \langle h_i\rangle_{\Z[\Gal(K/k)]}
\]
is finite with minimal index.

\end{definition}

\begin{remark}
Note that this definition of fundamental system of units differs from perhaps the more traditional one, where the final quotient is by the $\Z$-module generated by the $h_i$ (as opposed to the $\Z[\Gal(K/k)]$-module we use).
\end{remark}

We are going to show that a fundamental system of units exists. Instead of working directly with $\mathcal{U}_{K/k}$, we find it more convenient to formalize a more general situation and then restrict to the specific setting above. We begin with any additive commutative group $G$ playing the role of $\mathcal{U}_{K/k}$, which we also assume is a $A$-module, where $A=\Z[\Gal(K/k)]$. Note that the natural notation for $\mathcal{U}_{K/k}$ is multiplicative rather than additive, but to use the language of modules we are forced to work with additive groups. 

\begin{remark}
For convenience in the \Lean code, we take $A=\Z[\zeta]$ identified to $\Z[\Gal(K/k)]$ via $\zeta \mapsto \sigma$ (recall that $\sigma$ is a fixed generator of the Galois group).
\end{remark}

We start by defining the notion of \emph{system of $s$ units}\href{https://github.com/leanprover-community/flt-regular/blob/34cb497affb3355f8db9bb06ae271cb4b5f10c3b/FltRegular/NumberTheory/SystemOfUnits.lean#L25-L28}{\faExternalLink} where $s$ is a natural number. It is just a set of $s$ elements of $G$ that are linearly independent over $A$.
\begin{lstlisting}
structure systemOfUnits (G : Type*) [AddCommGroup G] (s : ℕ) where
  units : Fin s → G  --A choice of s elements of G
  linearIndependent : LinearIndependent A units
\end{lstlisting}
We prove\href{https://github.com/leanprover-community/flt-regular/blob/34cb497affb3355f8db9bb06ae271cb4b5f10c3b/FltRegular/NumberTheory/SystemOfUnits.lean#L109}{\faExternalLink} that if $G$ is free as $\Z$-module of rank $s(p-1)$, then a system of $s$ units always exists.

We now move on to the definition of \emph{fundamental systems of $r$ units}. We start by introducing\href{https://github.com/leanprover-community/flt-regular/blob/34cb497affb3355f8db9bb06ae271cb4b5f10c3b/FltRegular/NumberTheory/Hilbert92.lean#L28-L31}{\faExternalLink} the notion of \emph{maximal system} of $s$ units.
\begin{lstlisting}
abbrev systemOfUnits.IsMaximal {s : ℕ} {p : ℕ+} {G : Type*} [AddCommGroup G]
    [Module A G] (sys : systemOfUnits (G := G) p s) :=
    Fintype (G ⧸ Submodule.span A (Set.range sys.units))
\end{lstlisting}
By definition, being maximal means that the quotient by the submodule generated by the elements of the system of units is finite. We then prove\href{https://github.com/leanprover-community/flt-regular/blob/34cb497affb3355f8db9bb06ae271cb4b5f10c3b/FltRegular/NumberTheory/Hilbert92.lean#L34-L35}{\faExternalLink} that if there is a system of $s$ units then there is a maximal one if $\rk_{\Z}(G)=s(p-1)$.

A system of $s$ units is \emph{fundamental}\href{https://github.com/leanprover-community/flt-regular/blob/34cb497affb3355f8db9bb06ae271cb4b5f10c3b/FltRegular/NumberTheory/Hilbert92.lean#L46-L49}{\faExternalLink} if it is maximal and the submodule generated by the elements of the system has index smaller than those generated by any other maximal system.
\begin{lstlisting}
def systemOfUnits.IsFundamental [Module A G] (h : systemOfUnits p G s) :=
  ∃ _ : h.IsMaximal, ∀ (S : systemOfUnits p G s) (_ : S.IsMaximal), h.index ≤ S.index
\end{lstlisting}
We then prove\href{https://github.com/leanprover-community/flt-regular/blob/34cb497affb3355f8db9bb06ae271cb4b5f10c3b/FltRegular/NumberTheory/Hilbert92.lean#L107-L109}{\faExternalLink} that a fundamental system of $s$ units always exists if $G$ is free as $\Z$-module of rank $s(p-1)$ . The key property of fundamental systems of units is the following
\begin{lemma}\label{key_lem}
Let $a_0,\dots,a_{s-1}$ be integers not all divisible by $p$ and $\{h_i\}$ a fundamental system of $s$ units, then
\[
\prod_{i=0}^{s-1} h_i^{a_i} \neq u/\sigma(u)
\]
for any $u \in \mathcal{U}_{K/k}$.
\end{lemma}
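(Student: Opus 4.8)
The plan is to argue by contradiction, turning the minimality in the definition of a fundamental system into a strict descent on the index. First I would pass everything to additive notation, so that the inequality $\prod_i h_i^{a_i} \neq u/\sigma(u)$ to be proved becomes: the element $\sum_i a_i h_i$ does not lie in the image of $1-\zeta$ on $\mathcal{U}_{K/k}$, where $A = \Z[\zeta]$ acts through the identification $\zeta \leftrightarrow \sigma$ and $1-\zeta$ is the image of $1-\sigma$. The cyclotomic facts I would lean on are that $A/(1-\zeta) \cong \mathbb{F}_p$, that $\Norm_{\Q(\zeta)/\Q}(1-\zeta) = p$, and consequently that an integer lies in $(1-\zeta)A$ exactly when it is divisible by $p$. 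I also use that $\mathcal{U}_{K/k}$ is torsion-free over $\Z$ (being a quotient by its torsion), so multiplication by the nonzero element $1-\zeta$ is injective on it.

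Suppose then, for contradiction, that $\sum_i a_i h_i = (1-\zeta)u$ for some $u$, where $g \colonequals \gcd(a_i)$ is coprime to $p$ (equivalently, not all $a_i$ are divisible by $p$), and set $M = \langle h_0,\dots,h_{s-1}\rangle_A$. Since the $a_i$ are ordinary integers, I would invoke Smith normal form over $\Z$ to produce $Q \in GL_s(\Z) \subseteq GL_s(A)$ with $(a_0,\dots,a_{s-1})\,Q = (g,0,\dots,0)$. Replacing the fundamental system by its image under $Q^{-1}$ — which spans the very same submodule $M$ and is therefore again a fundamental system with the same index — I may assume the relation takes the simplified form $(1-\zeta)u = g h_0$.

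Next I would manufacture a strictly better maximal system. Because $p \nmid g$, the class of $g$ is a unit in $A/(1-\zeta) \cong \mathbb{F}_p$, hence $(1-\zeta, g) = A$ and there exist $\alpha,\beta \in A$ with $\alpha(1-\zeta) + \beta g = 1$. Setting $w \colonequals \alpha h_0 + \beta u \in \mathcal{U}_{K/k}$, a direct computation gives $(1-\zeta)w = \alpha(1-\zeta)h_0 + \beta g h_0 = h_0$, and then $(1-\zeta)(gw) = g h_0 = (1-\zeta)u$, so injectivity of $1-\zeta$ forces $u = gw$. Thus both $h_0 = (1-\zeta)w$ and $u$ lie in $Aw$, and the set $\{w, h_1,\dots,h_{s-1}\}$ has the same rational span as $\{h_0,\dots,h_{s-1}\}$; being $s$ elements with full rational span, they are $A$-linearly independent, i.e.\ a genuine system of $s$ units, whose span $M' = Aw \oplus \langle h_1,\dots,h_{s-1}\rangle_A$ contains $M$.

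Finally I would compare the two indices. Passing from the basis $\{h_0,h_1,\dots\}$ of $M$ to the basis $\{w,h_1,\dots\}$ of $M'$ is given by the diagonal change of coordinates $\mathrm{diag}(1-\zeta,1,\dots,1)$, so $[M':M] = |A/(1-\zeta)| = \Norm_{\Q(\zeta)/\Q}(1-\zeta) = p$. Multiplicativity of the index then gives $[\mathcal{U}_{K/k}:M'] = [\mathcal{U}_{K/k}:M]/p$, exhibiting a maximal system of $s$ units of strictly smaller index than the fundamental one, the desired contradiction. The hard part will be exactly this bookkeeping: arranging the $A = \Z[\zeta]$-module structure so that the $GL_s(\Z)$-reduction visibly stays within the class of fundamental systems, and then pinning down the index drop by the factor $\Norm(1-\zeta) = p$ cleanly enough to contradict minimality. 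The faithful translation between the multiplicative group $\OK^\times/\Ok^\times$ and its additive, torsion-free $A$-module avatar, together with the injectivity of $1-\zeta$ used to recover $u = gw$, are the supporting technical points that must be handled with care.
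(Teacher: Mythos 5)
Your proof is correct and takes essentially the same route as the paper's formalized proof of this lemma: assuming $\sum_i a_i h_i = (1-\zeta)u$, use the fact that an exponent coprime to $p$ is coprime to $1-\zeta$ in $A$ (B\'ezout, via $A/(1-\zeta) \cong \mathbb{F}_p$) to build a new element $w$ which, together with the remaining $h_i$, spans a maximal system whose module contains the old one with index exactly $|A/(1-\zeta)| = p$, contradicting minimality of the fundamental system's index. The only cosmetic difference is your preliminary Smith-normal-form reduction to a single coefficient $g = \gcd(a_i)$; the formalization instead applies B\'ezout directly at an index $i_0$ with $p \nmid a_{i_0}$, which makes that step unnecessary.
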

Our formalization is (note the additive notation)\href{https://github.com/leanprover-community/flt-regular/blob/34cb497affb3355f8db9bb06ae271cb4b5f10c3b/FltRegular/NumberTheory/Hilbert92.lean#L182-L184}{\faExternalLink} 
\begin{lstlisting}
lemma corollary [Module A G] (S : systemOfUnits p G s) (hs : S.IsFundamental)
    (a : Fin s → ℤ) (ha : ∃ i , ¬ (p : ℤ) ∣ a i) :
    ∀ g : G, (1 - zeta p) • g ≠ ∑ i, a i • S.units i := by ...
\end{lstlisting}
Here \lean{zeta : CyclotomicIntegers p := AdjoinRoot.root _} is the fixed root of the $p$-th cyclotomic polynomial in $A$, that corresponds to the primitive $p$-th root of unity $\zeta$ (and that acts as the fixed generator $\sigma$).

\subsection{Hilbert 91}
We now go back to our specific situation of a cyclic field extension $K/k$. We want to show that a system of fundamental $r+1$ units exists in the case $r = \rk_{\Z}(\Ok^\times)$ and $G = \mathcal{U}_{K/k}$. As $\mathcal{U}_{K/k}$ is torsion-free and hence free (as a $\Z$-module), this amounts to showing its rank is $(r+1)(p-1)$. This is essentially the content of Hilbert's Theorem 91. We begin by defining the group $\mathcal{U}'_{K/k}= \OK^\times / \Ok^\times$ as\href{https://github.com/leanprover-community/flt-regular/blob/34cb497affb3355f8db9bb06ae271cb4b5f10c3b/FltRegular/NumberTheory/Hilbert92.lean#L210-L211}{\faExternalLink}:
\begin{lstlisting}
def RelativeUnits (k K : Type*) [Field k] [Field K] [Algebra k K] :=
  ((OO K)ˣ ⧸ (MonoidHom.range <| 
                Units.map (algebraMap (OO k) (OO K) : (OO k) →* (OO K))))
\end{lstlisting}
Here \lean{MonoidHom.range <| Units.map (algebraMap (OO k) (OO K) : (OO k) →* (OO K)))}
denotes the image of the units in $k$ under the natural embedding into $\OK^\times$. Since we are going to work with a fixed generator of the Galois group, we find it more convenient to package together all the data we have\href{https://github.com/leanprover-community/flt-regular/blob/34cb497affb3355f8db9bb06ae271cb4b5f10c3b/FltRegular/NumberTheory/Hilbert92.lean#L331-L333}{\faExternalLink}.
\begin{lstlisting}
def relativeUnitsWithGenerator (_hp : Nat.Prime p) (_hKL : finrank k K = p) 
(σ : K ≃ₐ[k] K) (_hσ : ∀ x, x ∈ Subgroup.zpowers σ) :=
  RelativeUnits k K
\end{lstlisting}
This is the same as \lean{RelativeUnits}, but it contains the choice of a generator $\sigma$. Finally, we define our additive torsion-free group $\mathcal{U}_{K/k} = \mathcal{U}'_{K/k} / \mathcal{U}'_{K/k,\tors}$ as\href{https://github.com/leanprover-community/flt-regular/blob/34cb497affb3355f8db9bb06ae271cb4b5f10c3b/FltRegular/NumberTheory/Hilbert92.lean#L340-L342}{\faExternalLink}:
\begin{lstlisting}
local notation "G" =>
  Additive (relativeUnitsWithGenerator p hp hKL σ hσ) ⧸
    AddCommGroup.torsion (Additive (relativeUnitsWithGenerator p hp hKL σ hσ))
\end{lstlisting}
Here, if \lean{H} is a multiplicative group, \lean{Additive H} is the same group, but with additive notation. As explained above we are forced to use it, and it causes a little bit of friction, but all the relevant results to go from \lean{H} to \lean{Additive H} were already in \mathlib{}. We show\href{https://github.com/leanprover-community/flt-regular/blob/34cb497affb3355f8db9bb06ae271cb4b5f10c3b/FltRegular/NumberTheory/Hilbert92.lean#L294-L297}{\faExternalLink} that if we consider $G$ as a $\Z[X]$-module via $X \mapsto \sigma$, then it is torsion with respect to the subgroup generated by the cyclotomic polynomial. This endows\href{https://github.com/leanprover-community/flt-regular/blob/34cb497affb3355f8db9bb06ae271cb4b5f10c3b/FltRegular/NumberTheory/Hilbert92.lean#L570-L573}{\faExternalLink} $G$ with a structure of an $A$-module. Moreover, we show\href{https://github.com/leanprover-community/flt-regular/blob/34cb497affb3355f8db9bb06ae271cb4b5f10c3b/FltRegular/NumberTheory/Hilbert92.lean#L665}{\faExternalLink} that it has rank (as a $\Z$-module)  equal to $(r+1)(p-1)$ (this step is rather delicate: we need first of all to take care of the quotient by the torsion submodule and then we use that our extension is unramified at infinite places since $p$ is assumed to be odd).

Putting it all together, we can use our general existence result above to show that there is in fact a fundamental system of $r + 1$ units for $G$ as claimed by Hilbert's Theorem 91\href{https://github.com/leanprover-community/flt-regular/blob/34cb497affb3355f8db9bb06ae271cb4b5f10c3b/FltRegular/NumberTheory/Hilbert92.lean#L684-L685}{\faExternalLink}:
\begin{lstlisting}
lemma Hilbert91 :
  ∃ S : systemOfUnits p G (NumberField.Units.rank k + 1), S.IsFundamental :=
  systemOfUnits.IsFundamental.existence p hp G 
  (NumberField.Units.rank k + 1) (finrank_G p hp hKL σ hσ)
\end{lstlisting}

\subsection{Proving Hilbert's Theorem 92}
We can now finish the proof of Hilbert's Theorem 92, reducing the proof of Fermat's Last Theorem for regular primes to Hilbert's Theorem 90.

Let $K/k$ be as in~\ref{thm: Hilbert 92}. Remember that in particular $p$ is an odd prime and that $\sigma$ is a generator of $\Gal(K/k)$. Recall that $r = \rk_{\Z}(\Ok^\times)$ and let $h_i$ be a fundamental system of $r+1$ units for $\mathcal{U}_{K/k}$, that exists by Hilbert's Theorem 91 above. We denote by $H_i$ fixed lifts to elements of $\OK^\times$. We fix $h \in \N$ such that $k$ contains a $p^h$-th root of unity $\nu$ but no $p^{h+1}$-th root (note we allow $h$ to be zero). Let $\xi = \nu^{p^{h-1}}$ which is now a $p$-th root of unity as above, with the convention that if $h=0$ we have $\xi=\nu=1$.\footnote{In the formalization this convention is automatic, as $0-1=0$ for natural numbers in \Lean.}

With these notations the proof proceeds as follows (in the formalization most of the statements are actually more general than those below, for example stated for any family of $r$ elements of $\OK^\times$ when the fact that being a fundamental system of $r$ units is not needed):
\begin{enumerate}
    \item By Hilbert's Theorem 90, Theorem~\ref{thm: Hilbert 90 concr} below, we know that, since $\xi$ has norm $1$, we can write $\xi = \varepsilon /\sigma( \varepsilon )$ for some $ \varepsilon \in \OK$.
    \item  We can assume that there is some $E \in \OK^\times$ such that
    \[
    \xi = E / \sigma(E)
    \]
    as otherwise we could take $\xi$ as the element required by~\ref{thm: Hilbert 92}.
    \item Note that from $ \xi = E / \sigma(E)$ we have that\href{https://github.com/leanprover-community/flt-regular/blob/34cb497affb3355f8db9bb06ae271cb4b5f10c3b/FltRegular/NumberTheory/Hilbert92.lean#L626-L629}{\faExternalLink} $\Norm_{K/k}(E)=E^p$.
    \item Let $\eta_i = \Norm_{K/k}(H_i)$ for $i \in \{0, \dots, r\}$ and let $\eta_{r+1} := \Norm_{K/k}(E)=E^p$ (again most of the formal statements are for general $\eta_i$).
    \item \label{eq1} There exist\href{https://github.com/leanprover-community/flt-regular/blob/34cb497affb3355f8db9bb06ae271cb4b5f10c3b/FltRegular/NumberTheory/Hilbert92.lean#L512-L517}{\faExternalLink} $a, a_i \in \Z$ for $i \in \{0,\dots,r+1\}$ such that
    \[
    \prod_{i=0}^{r+1} \eta_i^{a_i} = \nu^{ap}
    \]
    with $a_{i_0}$ not divisible by $p$ for some $i_0$. Moreover, if $\nu=1$ (i.e. $h=0$) then we can take $i_0$ to be different from $r+1$.
    \item With these exponents $a_i$ in hand we can now construct our element. Set\href{https://github.com/leanprover-community/flt-regular/blob/34cb497affb3355f8db9bb06ae271cb4b5f10c3b/FltRegular/NumberTheory/Hilbert92.lean#L543-L544}{\faExternalLink}
    \[
    J \colonequals \nu^{-a}\prod_{i=0}^{r+1} H_i^{a_i}
    \]
    where we set $H_{r+1} = E$. We need to show it satisfies the required properties.
    \item Taking its norm we have that\href{https://github.com/leanprover-community/flt-regular/blob/34cb497affb3355f8db9bb06ae271cb4b5f10c3b/FltRegular/NumberTheory/Hilbert92.lean#L539-L545}{\faExternalLink}
    \[
    \Norm_{K/k}(J) = \left( \prod_{i=0}^{r+1} \eta_i^{a_i} \right )\nu^{-ap}
    \]
    which is $1$ by the condition on the $a_i$'s in~\eqref{eq1}.
    \item Now if $J$ was of the form $\varepsilon/\sigma(\varepsilon)$ for some unit $\varepsilon$, then the same would be true of its image in $\mathcal{U}_{K/k}$. Its image is 
    \[
    \nu^{-a}\prod_{i=0}^r h_i^{a_i} \cdot E^{a_{r+1}}.
    \]
    Now, since $\nu \in \Ok^\times$, its image in the quotient is $1$ (remember that here we are using multiplicative notation). Moreover, since $E^p = \Norm_{K/k}(E)$ also belongs to $\Ok^\times$ we have that its image is $1$, so the image of $E$ is torsion. Being $\mathcal{U}_{K/k}$ torsion-free, we have that $E$ becomes equal to $1$ in the quotient and the image of $J$ is
    \[
    \prod_{i=0}^r h_i^{a_i}.
    \]
    \item By Lemma~\ref{key_lem} to conclude the proof it is enough to show that not all of the $a_i$ for $i \in \{0,\dots,r\}$ are divisible by $p$. Assume for contradiction this is the case, then $a_{r+1}$ must be divisible by $p$. Now, this means $h \ne 0$ since otherwise this would contradict our choice of $a_i$'s. But now this means that $\eta_{r+1}=\Norm_{K/k}(E)=E^p$ is the $p$-th power of a unit in $k$ (this follows from~\eqref{eq1}). In particular $E$ is a \emph{unit} now in $k$. But this means that $\sigma(E)=E$, in particular $\xi=1$, but this cannot be the case as we have already shown $h \ne 0$. 
\end{enumerate}
The proof of the last two points are inlined in the proof of\href{https://github.com/leanprover-community/flt-regular/blob/34cb497affb3355f8db9bb06ae271cb4b5f10c3b/FltRegular/NumberTheory/Hilbert92.lean#L726-L727}{\faExternalLink} \lean{almostHilbert92}, that is Hilbert's Theorem 92 with the additional assumption that $K/k$ is unramified at all infinite places.
\begin{remark}
Note that we have slightly simplified the proof of Hilbert's Theorem 92 as found in~\cite{Hilbert}, which splits into two cases, the first when $K$ does not contain a $p$-th root and the second where it contains a $p^h$-th root (but not a $p^{h+1}$-th root) for some $h$. Our proof instead allows $h=0$ and so follows the second case. 
\end{remark}

\section{Hilbert's Theorem 90} \label{sec: Hilbert90}
Hilbert's Theorem 90 is a now a classical result in Galois cohomology, and \mathlib{} already contains a version of (a generalization of) it, thanks to the work of Amelia Livingston, see~\cite{amelia}. We explain in this section the precise statement we need, and how to get there given what is already in \mathlib{}.

The statement we needed above is the following, matching Hilbert's original formulation.
\begin{theorem}[Hilbert's Theorem 90, concrete version] \label{thm: Hilbert 90 concr}
Let $L/K$ be a cyclic extension of fields and let $\sigma \in \Gal(L/K)$ be a generator. If $\eta \in L$ is such that $\Norm_{L/K}(\eta)=1$, then there exists an $\varepsilon \in \OL$ such that $\varepsilon \neq 0$ and $\eta \sigma(\varepsilon) = \varepsilon$.
\end{theorem}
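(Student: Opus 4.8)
The plan is to deduce this concrete statement from the cohomological version of Hilbert's Theorem 90 already present in \mathlib{} (the vanishing of $H^1(\Gal(L/K), L^\times)$ for a finite Galois extension), and then to clear denominators so as to land in $\OL$. The bridge between the two formulations is the classical computation of the cohomology of a cyclic group: for $G = \langle\sigma\rangle$ of order $n = [L:K]$, the group $H^1(G, L^\times)$ is the quotient of the norm-one elements by those of the form $\varepsilon/\sigma(\varepsilon)$, so its vanishing is exactly the assertion we want, modulo the question of integrality.

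Concretely, I would first repackage the hypothesis $\Norm_{L/K}(\eta) = 1$ as a $1$-cocycle. Define $f \colon G \to L^\times$ on powers of the generator by the partial norms $f(\sigma^i) = \prod_{j=0}^{i-1}\sigma^j(\eta)$. The relation $\Norm_{L/K}(\eta) = f(\sigma^n) = 1$ is precisely what makes $f$ well defined on $G \cong \Z/n$, and a short computation shows that $f$ satisfies the multiplicative cocycle identity $f(\tau\rho) = f(\tau)\,\tau(f(\rho))$. Feeding $f$ into \mathlib{}'s Hilbert 90 then produces a $\beta \in L^\times$ exhibiting $f$ as a coboundary, from which $\eta\,\sigma(\beta) = \beta$ (up to the orientation convention fixed in \mathlib{}, which may instead yield $\eta = \sigma(\beta)/\beta$ and require replacing $\beta$ by $\beta^{-1}$).

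It then remains to upgrade $\beta \in L^\times$ to a nonzero element of $\OL$. Since $L$ is a number field, there is a nonzero integer $d \in \Z$ with $d\beta \in \OL$, obtained by clearing the denominators of a monic equation for $\beta$ over $\Q$. Setting $\varepsilon = d\beta$, the element is nonzero and lies in $\OL$, and because $d$ is fixed by $\sigma$ we get $\eta\,\sigma(\varepsilon) = d\,\eta\,\sigma(\beta) = d\beta = \varepsilon$, exactly as required.

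The main obstacle I anticipate is the first bridge: constructing the cocycle $f$ and matching it to the precise shape of the cocycle/coboundary data that \mathlib{}'s formalization of Hilbert 90 expects. Checking well-definedness on the quotient group $\Z/n$ and the cocycle identity is elementary but fiddly, and one must reconcile the multiplicative-versus-additive and left-versus-right conventions so that the $\beta$ extracted from the abstract theorem genuinely satisfies $\eta\,\sigma(\beta) = \beta$ rather than a variant. By contrast, the denominator-clearing step is routine, since the scaling factor is automatically $\sigma$-fixed.
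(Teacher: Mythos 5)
Your proposal matches the paper's proof essentially step for step: both reduce to the field-level statement $\eta = \varepsilon/\sigma(\varepsilon)$ via the partial-norm cocycle $\tau \mapsto \prod_{i}\sigma^i(\eta)$ fed into \mathlib{}'s Noether/Hilbert 90 (Theorem~\ref{thm: Hilbert 90 coho}), then clear denominators using a $\sigma$-fixed scalar. The only (harmless) difference is that you clear denominators with a rational integer in the number-field setting, while the formalization works in the more general $AKLB$ setting with a denominator from $A$.
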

\begin{proof}
By clearing denominators, it is enough to prove that there is $\varepsilon \in L$ such that $\varepsilon \neq 0$ and $\eta = \varepsilon/\sigma(\varepsilon)$. We then deduce the theorem from Noether's generalization of Hilbert's Theorem 90, Theorem~\ref{thm: Hilbert 90 coho} below. Let $c \colon \Gal(L/K) \to L^\times$ be the function\href{https://github.com/leanprover-community/flt-regular/blob/34cb497affb3355f8db9bb06ae271cb4b5f10c3b/FltRegular/NumberTheory/Hilbert90.lean#L18}{\faExternalLink} $\tau \mapsto \prod_{i = 0}^n \sigma^i(\eta)$, where $n \in \N$ is the unique natural number smaller than the order of $\Gal(L/K)$ such that $\sigma^n = \tau$. Using that $\Norm_{L/K}(\eta)= \prod_{\tau \in \Gal(L/K)} \tau(\eta) =1$ we have that\href{https://github.com/leanprover-community/flt-regular/blob/34cb497affb3355f8db9bb06ae271cb4b5f10c3b/FltRegular/NumberTheory/Hilbert90.lean#L99}{\faExternalLink} $c$ is a cocycle, i.e. $c(\tau_1 \tau_2) = \tau_1(c(\tau_2))c(\tau_1)$. Since $H^1(G, L^\times)$ is trivial by Theorem~\ref{thm: Hilbert 90 coho}, we have that $c$ is a coboundary. By definition this means that there is $x \in L$ such that $c(\tau) = \tau(x)/x$ for all $\tau \in \Gal(L/K)$ and in particular $c(\sigma) = \sigma(x)/x$. Since $c(\sigma) = \eta$, we have that $x\eta = \sigma(x)$ and $\varepsilon = x^{-1}$ satisfies the conditions of the theorem.
\end{proof}
Here is our formalization\href{https://github.com/leanprover-community/flt-regular/blob/34cb497affb3355f8db9bb06ae271cb4b5f10c3b/FltRegular/NumberTheory/Hilbert90.lean#L132-L133}{\faExternalLink}
\begin{lstlisting}
variable {K L : Type*} [Field K] [Field L] [Algebra K L]
  [IsGalois K L] [FiniteDimensional K L]
  {A B : Type*} [CommRing A] [CommRing B] [Algebra A B] [Algebra A L] [Algebra A K]
  [Algebra B L] [IsScalarTower A B L] [IsScalarTower A K L] [IsFractionRing A K]
  [IsDomain A] [IsIntegralClosure B A L] [IsDomain B]

lemma Hilbert90_integral (σ : L ≃ₐ[K] L) (hσ : ∀ x, x ∈ Subgroup.zpowers σ)
    (η : B) (hη : Algebra.norm K (algebraMap B L η) = 1) :
    ∃ ε : B, ε ≠ 0 ∧ η * galRestrict A K L B σ ε = ε := by ...
\end{lstlisting}
Note that in the formalization, instead of working with $\OK$ and $\OL$ (the integral closure of $\Z$ in $K$ and $L$ respectively), we decided to work with general integral domains $A$ and $B$ such that $B$ is in the integral closure of $A$ in $L$; see Section~\ref{sec: form stuff} for more details about this design decision.

The formal proof follows the informal one, clearing denominators (a step that in \Lean{} is not completely trivial since we need to consider the restriction of $\sigma$ to $B$) and deducing the theorem from\href{https://github.com/leanprover-community/flt-regular/blob/34cb497affb3355f8db9bb06ae271cb4b5f10c3b/FltRegular/NumberTheory/Hilbert90.lean#L104}{\faExternalLink}
\begin{lstlisting}
lemma Hilbert90 {η : L} (hη : Algebra.norm K η = 1) :
    ∃ ε : L, η = ε / σ ε := by ...
\end{lstlisting}
Note that even if technically the statement does not contain the fact that $\varepsilon \neq 0$, this is automatic: if $\varepsilon = 0$, then $\sigma(\varepsilon) = 0$ and hence $\varepsilon /\sigma(\varepsilon) = 0$ (remember that in \Lean{} division by $0$ returns $0$), so $\eta = 0$, which is impossible because $\Norm_{L/K} (\eta) = 1$.

Everything is now reduced to the following
\begin{theorem}[Noether's generalization of Hilbert's Theorem 90] \label{thm: Hilbert 90 coho}
Let $L/K$ be a finite Galois extension of fields with Galois group $G$. Then $H^1(G, L^\times)$ is trivial.
\end{theorem}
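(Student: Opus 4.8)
The plan is to obtain this from the group cohomology library already present in \mathlib{}, rather than reproving it from scratch. Concretely, $L^\times$ carries a $G$-module structure via the Galois action, and the assertion $H^1(G, L^\times) = 0$ is exactly (a special case of) the Hilbert 90 result formalized by Livingston~\cite{amelia}. So the first thing I would do is install the $G$-action on $L^\times$ as the appropriate module instance and check that \mathlib{}'s statement specializes to ours; the only real work is matching the conventions for cocycles and coboundaries used by the cohomology API to the concrete ones ($c(\tau_1\tau_2) = \tau_1(c(\tau_2))\,c(\tau_1)$ and $c(\tau) = \tau(x)/x$) that the proof of Theorem~\ref{thm: Hilbert 90 concr} feeds in.

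The underlying mathematical argument, which I would fall back on if the \mathlib{} statement did not line up cleanly, is the classical one via Dedekind's independence of characters. Given a $1$-cocycle $c \colon G \to L^\times$, the distinct automorphisms $\{\sigma\}_{\sigma \in G}$ are linearly independent over $L$, so the $L$-linear map $a \mapsto \sum_{\sigma \in G} c(\sigma)\,\sigma(a)$ is not identically zero (its coefficients $c(\sigma)$ are all nonzero). One then picks $a$ with $b \colonequals \sum_{\sigma} c(\sigma)\,\sigma(a) \neq 0$. Applying $\tau$ and using the cocycle relation to rewrite $\tau(c(\sigma)) = c(\tau\sigma)\,c(\tau)^{-1}$, the sum reindexes along the bijection $\sigma \mapsto \tau\sigma$ to give $\tau(b) = c(\tau)^{-1} b$. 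Hence $c(\tau) = \tau(x)/x$ with $x = b^{-1}$, exhibiting $c$ as a coboundary and proving $H^1(G, L^\times) = 0$.

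The main obstacle is essentially bookkeeping rather than mathematics: reconciling the abstract definition of $H^1$ in \mathlib{} (cocycles and coboundaries emerging from the general machinery, with their own order and inversion conventions) with the hands-on formula consumed in Section~\ref{sec: Hilbert90}. If instead the independence-of-characters route is taken, the main subtlety is producing the witness $b \neq 0$ from \mathlib{}'s linear-independence-of-characters lemma and carrying out the reindexing $\sum_\sigma c(\tau\sigma)\,(\tau\sigma)(a) = \sum_\rho c(\rho)\,\rho(a)$ as a clean change of summation variable over the finite group $G$. Either way, no deep input beyond Dedekind independence (or its cohomological packaging) is required.
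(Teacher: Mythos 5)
Your proposal takes essentially the same route as the paper: the paper does not reprove this result at all, but simply invokes the version already in \mathlib{} from Livingston's formalization~\cite{amelia} (\lean{isMulOneCoboundary_of_isMulOneCocycle_of_aut_to_units}), whose cocycle and coboundary conventions match the ones you identify. Your fallback argument via Dedekind's independence of characters is the standard classical proof and is correct, but it is not needed, since the \mathlib{} statement lines up directly with what Section~\ref{sec: Hilbert90} consumes.
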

This is a nowadays classical result in group cohomology, and we will not recall the proof. Thanks to the formalization project~\cite{amelia} by Amelia Livingston, it is already present in \mathlib{}\href{https://github.com/leanprover-community/mathlib4/blob/4705244150fe57b4006d3393c5040d3d74e024b4/Mathlib/RepresentationTheory/GroupCohomology/Hilbert90.lean#L76-L81}{\faExternalLink}
\begin{lstlisting}
/-- Noether's generalization of Hilbert's Theorem 90: given a finite extension of fields and a function `f : Aut_K(L) → Lˣ` satisfying `f(gh) = g(f(h)) * f(g)` for all `g, h : Aut_K(L)`, there exists `β : Lˣ` such that `g(β)/β = f(g)` for all `g : Aut_K(L).` -/
theorem isMulOneCoboundary_of_isMulOneCocycle_of_aut_to_units
    (f : (L ≃ₐ[K] L) → Lˣ) (hf : IsMulOneCocycle f) :
    IsMulOneCoboundary f := by ...
\end{lstlisting}
here \lean{IsMulOneCocycle} and \lean{IsMulOneCoboundary} are defined as follows\href{https://github.com/leanprover-community/mathlib4/blob/4705244150fe57b4006d3393c5040d3d74e024b4/Mathlib/RepresentationTheory/GroupCohomology/LowDegree.lean#L606-L608}{\faExternalLink}\href{https://github.com/leanprover-community/mathlib4/blob/4705244150fe57b4006d3393c5040d3d74e024b4/Mathlib/RepresentationTheory/GroupCohomology/LowDegree.lean#L658-L660}{\faExternalLink} (where $M$ is any $G$ module, for a group $G$)
\begin{lstlisting}
def IsMulOneCocycle (f : G → M) : Prop := ∀ g h : G, f (g * h) = g • f h * f g

def IsMulOneCoboundary (f : G → M) : Prop := ∃ x : M, ∀ g : G, g • x / x = f g
\end{lstlisting}
Note that group cohomology here is defined in very concrete terms, as cocycles modulo coboundaries, and not via abstract machinery like derived functors (that exist in \mathlib{}). Even if at some point \mathlib{} will need the connection between these two definitions, the current design choice is very convenient for us, since it allowed us to prove the explicit theorem we need quickly.

\section{Implementation issues} \label{sec: form stuff}
We now describe two implementation issues we encountered during our formalization project. As the problem of the coercions from $\OK^\times$ to $\OK$ to $K$ and the diamonds related to characteristic zero fields are already discussed in~\cite[Section~3]{case1}, we will not discuss those.

After the port to \Leanf{}, we noticed that the project was very slow in several places, making it almost impossible to progress with the formalization. In collaboration with the \mathlib{} community (especially Matthew Ballard, Kevin Buzzard and Floris van Doorn), we identified that the main bottleneck was the definition of the ring of integers of a number field in \mathlib{}, that was
\begin{lstlisting}
  def RingOfIntegers := integralClosure ℤ K
  @[inherit_doc] scoped notation "OO" => NumberField.ringOfIntegers
\end{lstlisting}
In general, if $A$ is an $R$-algebra, the type of \lean{integralClosure R A} is \lean{Subalgebra R A}. This has the advantage that a lot of instances, for example \lean{CommRing (OO K)}, are found automatically by \Lean{}. On the other hand, it causes the following drawback. \mathlib{} contains the following instance\href{https://github.com/leanprover-community/mathlib4/blob/4705244150fe57b4006d3393c5040d3d74e024b4/Mathlib/Algebra/Algebra/Subalgebra/Basic.lean#L758-L760}{\faExternalLink}, where $\alpha$ is a type endowed with \lean{SMul A α}, that is a scalar multiplication by $A$.
\begin{lstlisting}
  instance [SMul A α] (S : Subalgebra R A) : SMul S α := ...
\end{lstlisting}
Mathematically this simply means that if we know how to multiply elements of $\alpha$ by any \lean{(a : A)} we can also multiply elements of $\alpha$ by any \lean{(s : S)}. In particular, any time an instance like \lean{SMul (OO K) (OO K)} is needed (for example looking for \lean{Module (OO K) (OO K)}), \Lean{} will look for an instance of \lean{SMul K (OO K)} (a mathematically meaningless problem), and this search will of course fail. What we realized is that this search is rather slow to fail, and, due to the intricacies of the algebra hierarchy in \mathlib{}, it is performed many times. We tried to manually lower the priority of the above instance, but this did not solve the problem. Indeed it is sometimes difficult to control the path chosen by the typeclass inference system, and this solution would not scale anyway. In the pull request \href{https://github.com/leanprover-community/mathlib4/pull/12386/}{\#12386} we then decided to change the definition of \lean{ringOfIntegers} to\href{https://github.com/leanprover-community/mathlib4/blob/4705244150fe57b4006d3393c5040d3d74e024b4/Mathlib/NumberTheory/NumberField/Basic.lean#L85-L95}{\faExternalLink}
\begin{lstlisting}
  def RingOfIntegers : Type _ := integralClosure ℤ K
\end{lstlisting}
The only difference with the above one is that the type of \lean{OO K} is now \lean{Type _} rather than \lean{Subalgebra ℤ K}. To be precise, \lean{OO K} is now the underlying type of \lean{integralClosure ℤ K}: this means that, for example, the following instance\href{https://github.com/leanprover-community/mathlib4/blob/4705244150fe57b4006d3393c5040d3d74e024b4/Mathlib/NumberTheory/NumberField/Basic.lean#L101-L102}{\faExternalLink}
\begin{lstlisting}
  instance : CommRing (OO K) := inferInstanceAs (CommRing (integralClosure _ _))
\end{lstlisting}
has to be added by hand (here the command \lean{inferInstanceAs} allows \Lean{} to see through the definition of \lean{RingOfIntegers}). On the other hand, it also means that the instance from \lean{SMul K (OO K)} to \lean{SMul (OO K) (OO K)} is not even considered when solving \lean{Module (OO K) (OO K)}. This simple modification made essentially all \mathlib's files related to ring of integers of number fields much faster (see the \href{http://speed.lean-fro.org/mathlib4/compare/c4507a75-3c32-47fe-99bb-ce918c649d8b/to/12429d19-2f18-4b6f-b762-2f3668b995b7}{benchmark results}) and drastically improved the speed of \fltregular. Note that the number of instances that have to be added by hand is quite limited, and the gain in speed is huge: we expect that a similar modification will be needed in \mathlib{} in the future for other analogous situations, for example in the definition of the adele ring of a global field.

Another issue is related to field extensions: if $L/K$ is such an extension, then $\OL$ is naturally an $\OK$-algebra. If moreover $K$ and $L$ are number fields, it is easy to prove that mathematically $\OL$ ``is'' the integral closure of $\OK$ in $L$. However, it is not possible in \Lean{} to prove the naive equality of types \lean{OO L = integralClosure (OO K) L}, since the correct formal translation of this result is a datum of an isomorphism (as $\OK$-algebras) between $\OL$ and the integral closure of $\OK$ in $L$. Working with such an isomorphism can be annoying, as it will appear several times. To avoid this issue, we decided to work more generally in the so-called \emph{$AKLB$ setting}:
\begin{lstlisting}
  variable {A K L B : Type*} [CommRing A] [CommRing B] [Field K] [Field L]
    [Algebra A B] [Algebra A L] [Algebra A K] [Algebra B L]
    [IsScalarTower A B L] [IsScalarTower A K L] [IsFractionRing A K] [IsIntegralClosure B A L]
\end{lstlisting}
Any result holding in this setting will apply both to \lean{B = OO L} and to \lean{B = integralClosure (OO K) L}, allowing very often to avoid the use of the isomorphism above (one may assume \lean{IsIntegralClosure ℤ A K} to state that $A$ ``is'' $\OK$). This setting is very common in algebraic number theory, and we believe this way of formalizing it should be used in \mathlib{} every time it is possible.

\section{Future work} \label{sec: future}
As explained in~\cite[Section~4]{case1}, one peculiar aspect of our project is that the integration into \mathlib{} is happening in real time (for example this was not happening at all for the LTE). Even if we don't think the full proof of Fermat's Last Theorem for regular primes should be included into \mathlib{} (as it is very technical and mathematically superseded by more modern approaches), most of the prerequisites we formalized are fundamental results in algebraic number theory and should move to the main library. We estimate that of the approximately $10k$ lines of the project, around $6k$ lines should end up in mathlib of which roughly $3k$ are already present. The process of opening pull requests from \fltregular to \mathlib{} never stopped, and we expect that all the relevant material will end up in \mathlib{} at some point. One notable example of this process is the proof of Fermat's Last Theorem for $n = 3$, that is now in \mathlib{}\href{https://github.com/leanprover-community/mathlib4/blob/4705244150fe57b4006d3393c5040d3d74e024b4/Mathlib/NumberTheory/FLT/Three.lean#L741-L743}{\faExternalLink}. This rather nontrivial case, along with $n = 4$ (also in \mathlib{}\href{https://github.com/leanprover-community/mathlib4/blob/4705244150fe57b4006d3393c5040d3d74e024b4/Mathlib/NumberTheory/FLT/Four.lean#L294-L298}{\faExternalLink}), will be required for Kevin Buzzard's \href{https://github.com/ImperialCollegeLondon/FLT}{project} of formalizing a proof of the full Fermat's Last Theorem, as modern proofs only cover prime exponents $p \ge 5$.

Concerning future formalizations, one natural question that is left by the current status of \fltregular is that the condition for a prime of being regular is rather abstract, and difficult to prove in practice. Indeed, the computation of the class number of $\Q(\e^{\frac{2\pi i}{n}})$ is a difficult problem, even on paper. Thanks to Xavier Roblot's work, \cite{acnf}, Minkowski's bound is in \mathlib{}\href{https://github.com/leanprover-community/mathlib4/blob/d376bfc5782e90bafe2815b936ece41bf3b69f05/Mathlib/NumberTheory/NumberField/ClassNumber.lean#L48-L51}{\faExternalLink}. In particular it is easy to prove that both $\Z[\e^{\frac{2\pi i}{3}}]$\href{https://github.com/leanprover-community/mathlib4/blob/4705244150fe57b4006d3393c5040d3d74e024b4/Mathlib/NumberTheory/Cyclotomic/PID.lean#L29-L30}{\faExternalLink} and $\Z[\e^{\frac{2\pi i}{5}}]$\href{https://github.com/leanprover-community/mathlib4/blob/4705244150fe57b4006d3393c5040d3d74e024b4/Mathlib/NumberTheory/Cyclotomic/PID.lean#L43-L44}{\faExternalLink} are principal ideal domains (since the Minkowski bound is $1$ in these cases), and hence $3$\href{https://github.com/leanprover-community/flt-regular/blob/34cb497affb3355f8db9bb06ae271cb4b5f10c3b/FltRegular/NumberTheory/RegularPrimes.lean#L79-L81}{\faExternalLink} and $5$\href{https://github.com/leanprover-community/flt-regular/blob/34cb497affb3355f8db9bb06ae271cb4b5f10c3b/FltRegular/FLT5.lean#L6-L8}{\faExternalLink} are regular primes (we also know that $2$ is regular\href{https://github.com/leanprover-community/flt-regular/blob/34cb497affb3355f8db9bb06ae271cb4b5f10c3b/FltRegular/NumberTheory/RegularPrimes.lean#L72}{\faExternalLink}) and the case $n=5$ of Fermat's Last Theorem follows\href{https://github.com/leanprover-community/flt-regular/blob/34cb497affb3355f8db9bb06ae271cb4b5f10c3b/FltRegular/FLT5.lean#L13}{\faExternalLink}. This is to our knowledge the first formalization of the nonexistence of nontrivial solutions to
\[
x^5 + y^5 = z ^ 5
\]
A natural question is to give more explicit examples of regular primes. One can prove that $\Z[\e^{\frac{2\pi i}{p}}]$ is principal for all $p \leq 19$ (the converse also holds), but the proof is more and more involved: indeed Minkowski's bound becomes exponentially larger and one has to check a lot of cases by hand, but more recently we have completed the cases $7$\href{https://github.com/leanprover-community/flt-regular/blob/seven/FltRegular/FLT7.lean#L12}{\faExternalLink}, $11$\href{https://github.com/leanprover-community/flt-regular/blob/seven/FltRegular/FLT11.lean#L12}{\faExternalLink} and $13$\href{https://github.com/leanprover-community/flt-regular/blob/seven/FltRegular/FLT13.lean#L12}{\faExternalLink}. Note that this is very cumbersome and it cannot work for $p > 19$ (for example $\Z[\e^{\frac{2\pi i}{23}}]$ has class number $3$, so $23$ is regular). A better approach is to prove regularity via the following result of Kummer \cite{kummer2} (see also~\cite[Theorem~5.34]{washington})
\begin{theorem} \label{thm: bernouilli}
An odd prime $p$ is regular if and only if $p$ does not divide the denominator of the Bernoulli number $B_n$ for $n = 2, 4, \ldots, p-3$.
\end{theorem}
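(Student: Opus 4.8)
The plan is to deduce the criterion from the analytic class number formula for the cyclotomic field $K = \Q(\e^{2\pi i / p})$, following the classical argument of \cite[Chapter~5]{washington}. Let $K^+$ be the maximal real subfield of $K$, write $h_K$ for the class number of $K$ and $h^+$ for that of $K^+$, and let $h^- = h_K / h^+$ be the (integral) relative class number. Since $p$ is regular precisely when $p \nmid h_K$, the first step is the reduction that $p \mid h_K$ if and only if $p \mid h^-$. This rests on the classical theorem of Kummer that $p \mid h^+$ forces $p \mid h^-$; granting it, the $p$-divisibility of $h_K = h^+ h^-$ is governed entirely by the minus part.

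Next I would invoke the analytic class number formula in the form
\[
h^- = Q\, w \prod_{\chi \text{ odd}} \left(-\tfrac{1}{2} B_{1,\chi}\right),
\]
where the product ranges over the odd Dirichlet characters modulo $p$, $B_{1,\chi}$ is the associated generalized Bernoulli number, $w = 2p$ is the number of roots of unity in $K$, and $Q \in \{1,2\}$ is the unit index. As $Q$ and $w$ contribute only controlled powers of $2$ and $p$, the divisibility $p \mid h^-$ reduces to asking whether $p$ divides one of the factors $-\tfrac12 B_{1,\chi}$.

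Finally I would convert the $B_{1,\chi}$ into ordinary Bernoulli numbers. Writing each odd character as $\chi = \omega^{m-1}$ with $\omega$ the Teichmüller character and $m$ even, the Kummer congruence gives
\[
B_{1,\omega^{m-1}} \equiv \frac{B_m}{m} \pmod{p} \qquad (2 \le m \le p-3),
\]
so that $p$ divides some $B_{1,\chi}$ exactly when $p$ divides $B_m$ for some even $m$ in the range $2 \le m \le p-3$. The single remaining odd character $\chi = \omega^{-1}$ (the case $m = p-1$, where $B_{p-1}$ fails to be $p$-integral) must be treated separately and contributes no extra $p$-divisibility beyond that already recorded in $w$. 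Here von Staudt--Clausen is what makes the statement clean: for $p-1 \nmid m$ the denominator of $B_m$ is prime to $p$, so the congruences above are $p$-integral and divisibility by $p$ is genuinely a condition on the numerators.

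The \emph{main obstacle} is the analytic input: the class number formula together with the evaluation of Dirichlet $L$-functions (equivalently the special values $L(1-n,\chi)$) in terms of generalized Bernoulli numbers. This is precisely the cyclotomic analytic machinery that, as noted in Section~\ref{sec: kummer}, is not yet available in \mathlib, which is why we only state this criterion rather than formalize it. The reduction $p \mid h^+ \Rightarrow p \mid h^-$ is likewise delicate, though classical; once both are in hand, the remaining work (the Kummer congruences and the von Staudt--Clausen bookkeeping) is elementary and should formalize without difficulty.
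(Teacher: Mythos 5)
The first thing to note is that the paper contains no proof of Theorem~\ref{thm: bernouilli} to compare against: it is stated as a known result of Kummer, cited to \cite[Theorem~5.34]{washington}, and its formalization is explicitly deferred to future work, the only hint at a route being the remark that one possible proof uses the $p$-adic class number formula (whose prerequisite, the theory of $p$-adic $L$-functions~\cite{ashvni}, is not yet available in \Leanf). Your sketch is the standard complex-analytic argument of \cite[Chapter~5]{washington} --- the factorization $h_K = h^+ h^-$, the product formula for $h^-$ in terms of the generalized Bernoulli numbers $B_{1,\chi}$, the Kummer congruences, and von Staudt--Clausen --- and as an outline of that classical proof it is accurate, including the correct disposal of the exceptional odd character $\omega^{-1}$ against the factor $w = 2p$.

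Two points deserve flagging. First, the step you ``grant'' --- that $p \mid h^+$ forces $p \mid h^-$ --- is not a side lemma: in the treatment you cite it is exactly Washington's Theorem~5.34, and its proof there runs through the $p$-adic class number formula and congruences for $L_p(1,\chi)$, i.e.\ precisely the machinery the paper names as the missing ingredient. So your plan does not avoid the $p$-adic analytic input; it relocates it into the granted step, and an honest account should present that step as carrying weight comparable to the $h^-$ formula itself. Second, your own invocation of von Staudt--Clausen exposes a defect in the statement as printed: for $n = 2, 4, \ldots, p-3$ we have $p-1 \nmid n$, so $p$ \emph{never} divides the denominator of $B_n$, and the theorem as literally stated would make every odd prime regular --- contradicting the paper's own observation that $37$, $59$, $67$ are irregular. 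The intended condition, and the one your argument actually establishes, is that $p$ does not divide the \emph{numerator} of $B_n$ for $n = 2, 4, \ldots, p-3$; this should be corrected in the text.
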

Since Bernoulli numbers are very easy to compute (and already in \mathlib{}\href{https://github.com/leanprover-community/mathlib4/blob/4705244150fe57b4006d3393c5040d3d74e024b4/Mathlib/NumberTheory/Bernoulli.lean#L177-L179}{\faExternalLink}) this criterion gives a very easy way of checking whether a given $p$ is regular (for example one immediately obtains that the only irregular primes $p \leq 100$ are $37$, $59$ and $67$). One possible proof of Theorem~\ref{thm: bernouilli} uses the $p$-adic class number formula, which is proved using $p$-adic $L$-functions. The basic theory of $p$-adic $L$-functions has been formalized by Narayanan in \Leant{} (see~\cite{ashvni}), and a port to \Leanf{} is work in progress. For these reasons we believe that the formalization of Theorem~\ref{thm: bernouilli} is within reach, and we plan to accomplish it in the near future.

\bibliographystyle{amsalpha}
\bibliography{biblio}

\end{document}